\documentclass[11pt]{article}

\usepackage{fullpage}

\title{Universal Hinge Patterns for Folding Strips \\ Efficiently into Any Grid Polyhedron}
\author{%
  Nadia M. Benbernou%
    \thanks{Google Inc., \protect\url{nbenbern@gmail.com}.
      Work performed while at MIT.}
\and
  Erik D. Demaine%
    \thanks{MIT Computer Science and Artificial Intelligence Laboratory,
      32 Vassar St., Cambridge, MA 02139, USA,
      \protect\url{{edemaine,mdemaine}@mit.edu}}
\and
  Martin L. Demaine\footnotemark[2]
\and
  Anna Lubiw%
    \thanks{David R. Cheriton School of Computer Science,
      University of Waterloo, Waterloo, Ontario N2L 3G1, Canada,
      \protect\url{alubiw@uwaterloo.ca}}
}
\date{}

\usepackage{amsmath}
\usepackage{amsthm}
\usepackage{amsfonts}
\usepackage{graphicx}
\usepackage{hyperref}
\usepackage{subfigure}
\usepackage{tabularx}
\urlstyle{same}

\def\min{\text{min}}
\def\max{\text{max}}

\def\T{\mathcal{T}}
\def\weight{\text{weight}}
\newtheorem{theorem}{Theorem}
\newtheorem{lemma}[theorem]{Lemma}
\newtheorem{prop}[theorem]{Proposition}
\theoremstyle{definition}

\newtheorem{corollary}[theorem]{Corollary}

{\makeatletter
 \gdef\xxxmark{%
   \expandafter\ifx\csname @mpargs\endcsname\relax 
     \expandafter\ifx\csname @captype\endcsname\relax 
       \marginpar{xxx}
     \else
       xxx 
     \fi
   \else
     xxx 
   \fi}
 \gdef\xxx{\@ifnextchar[\xxx@lab\xxx@nolab}
 \long\gdef\xxx@lab[#1]#2{{\bf [\xxxmark #2 ---{\sc #1}]}}
 \long\gdef\xxx@nolab#1{{\bf [\xxxmark #1]}}
}


\begin{document}
\maketitle

\begin{abstract}
  We present two universal hinge patterns that enable a strip of material to
  fold into any connected surface made up of unit squares on the 3D cube
  grid---for example, the surface of any polycube.
  The folding is efficient:
  for target surfaces topologically equivalent to a sphere,
  the strip needs to have only twice the target surface area,
  and the folding stacks at most two layers of material anywhere.
  These geometric results offer a new way to build programmable matter
  that is substantially more efficient than what is possible with
  a square $N \times N$ sheet of material, which can fold into all polycubes
  only of surface area $O(N)$ and may stack $\Theta(N^2)$ layers at one point.
  We also show how our strip foldings can be executed by a rigid motion
  without collisions (albeit assuming zero thickness),
  which is not possible in general with 2D sheet folding.

  To achieve these results, we develop new approximation algorithms
  for milling the surface of a grid polyhedron,
  which simultaneously give a $2$-approximation in tour length
  and an $8/3$-approximation in the number of turns.
  Both length and turns consume area when folding a strip,
  so we build on past approximation algorithms
  for these two objectives from 2D milling.
\end{abstract}

\section{Introduction}
\label{StripFolding:sec:intro}

In \emph{computational origami design}, the goal is generally to develop an
algorithm that, given a desired shape or property, produces a crease pattern
that folds into an origami with that shape or property.
Examples include folding any shape \cite{Demaine-Demaine-Mitchell-2000},
folding approximately any shape while being watertight
\cite{Demaine-Tachi-2017},
and optimally folding a shape whose projection is a desired metric tree
\cite{Lang-1996,Lang-Demaine-2006}.
In all of these results, every different shape or tree results in a
completely different crease pattern; two shapes rarely share many
(or even any) creases.

The idea of a \emph{universal hinge pattern} \cite{origami5}
is that a finite set of \emph{hinges} (possible creases)
suffice to make exponentially many different shapes.
The main result along these lines is that an $N \times N$
``box-pleat'' grid suffices to make any polycube made of $O(N)$ cubes
\cite{origami5}.  The \emph{box-pleat grid} is a square grid plus
alternating diagonals in the squares, also known as the ``tetrakis tiling''.
For each target polycube, a subset of the hinges in the grid
serve as the crease pattern for that shape.
Polycubes form a \emph{universal} set of shapes in that they can arbitrarily
closely approximate (in the sense of Hausdorff metric) any desired volume.

The motivation for universal hinge patterns is the implementation of
\emph{programmable matter}---material whose shape can be externally programmed.
One approach to programmable matter, developed by an MIT--Harvard
collaboration, is a \emph{self-folding sheet}---a sheet of material that can
fold itself into several different origami designs, without manipulation
by a human origamist \cite{pnas, robotica}. 
For practicality, the sheet must consist of a fixed pattern of hinges,
each with an embedded actuator that can be programmed to fold or not.
Thus for the programmable matter to be able to form a universal set of
shapes, we need a universal hinge pattern.

The box-pleated polycube result \cite{origami5}, however, has some practical
limitations that prevent direct application to programmable matter.
Specifically, using a sheet of area $\Theta(N^2)$ to fold $N$ cubes means
that all but a $\Theta(1/N)$ fraction of the surface area is wasted.
Unfortunately, this reduction in surface area is necessary for a roughly
square sheet, as folding a $1 \times 1 \times N$ tube requires a sheet
of diameter $\Omega(N)$.
Furthermore, a polycube made from $N$ cubes can have surface area as low as
$\Theta(N^{2/3})$, resulting in further wastage of surface area in the worst case.
Given the factor-$\Omega(N)$ reduction in surface area, an average of
$\Omega(N)$ layers of material come together on the polycube surface.
Indeed, the current approach can have up to $\Theta(N^2)$ layers
coming together at a single point \cite{origami5}.
Real-world robotic materials have significant thickness,
given the embedded actuation and electronics,
meaning that only a few overlapping layers are really practical \cite{pnas}.

\paragraph{Our results: strip folding.}
In this paper, we introduce two new universal hinge patterns that avoid
these inefficiencies, by using sheets of material that are long only in one
dimension (``strips'').
Specifically, Figure~\ref{StripFolding:fig:StripTypes} shows the two hinge patterns:
the \emph{canonical strip} is a $1 \times N$ strip with hinges at integer
grid lines and same-oriented diagonals, while the \emph{zig-zag strip} is an
$N$-square zig-zag with hinges at just integer grid lines.
We show in Section~\ref{Universality}
that any \emph{grid surface}---any connected surface made up of
unit squares on the 3D cube grid%
%
---can be folded from either strip.  The strip length only needs to be a
constant factor larger than the surface area, and the number of layers is
at most a constant throughout the folding.
Most of our analysis concerns (genus-0) \emph{grid polyhedra}, that is,
when the surface is topologically equivalent to a sphere
(a manifold without boundary, so that every edge is incident to exactly
two grid squares, and without handles, unlike a torus).
We show in Section~\ref{StripFolding:sec:StripFolding} that
a grid polyhedron of surface area $N$ can be folded from a
canonical strip of length $2 N$ with at most two layers everywhere,
or from a zig-zag strip of length $4 N$ with at most four layers everywhere.

\begin{figure}
\centering
\subfigure[\label{StripFolding:fig:CanonicalStrip}]{\includegraphics[scale=0.5]{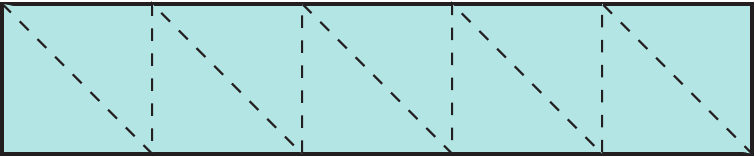}}\hfil\hfil\hfil
 \subfigure[\label{StripFolding:fig:ZigZagStrip}]{\includegraphics[scale=0.5]{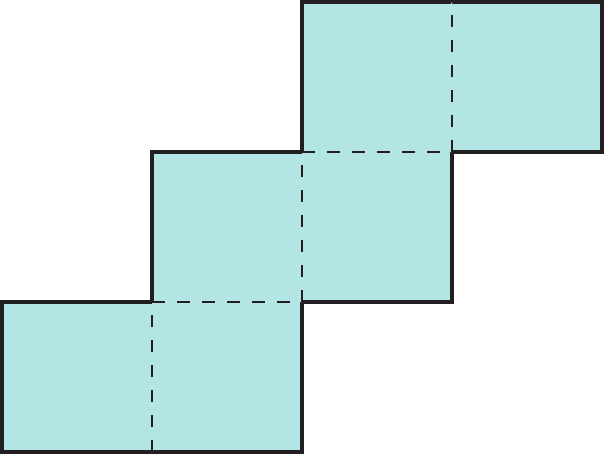}}
\caption{Two universal hinge patterns in strips. (a) A canonical strip of length $5$. (b) A zig-zag strip of length~$6$.
 The dashed lines are hinges.}
\label{StripFolding:fig:StripTypes}
\end{figure}

The improved surface efficiency and reduced layering of these strip results
seem more practical for programmable matter.
In addition, the panels of either strip (the facets delineated by hinges)
are connected acyclically into a path, making them potentially
easier to control.
One potential drawback is that the reduced connectivity
makes for a flimsier device; this issue can be mitigated by
adding tabs to the edges of the strips to make full two-dimensional contacts
across seams and thereby increase strength.

We also show in Section~\ref{StripFolding:subsec:Collision}
an important practical result for our strip foldings:
assuming a small lower bound on feature size,
we give an algorithm for actually folding the strip into the desired shape,
while keeping the panels rigid (flat) and avoiding self-intersection
throughout the motion.
Such a rigid folding process is important given current fabrication materials,
which put flexibility only in the creases between panels \cite{pnas}.
An important limitation, however, is that we assume zero thickness
of the material, which would need to be avoided before this method
becomes practical.

Our approach is also related to the 1D chain robots of \cite{Moteins_TRO},
but based on thin material instead of thick solid chains.
Most notably, working with thin material enables us to use a few
overlapping layers to make any desired surface without scaling,
and still with high efficiency.
Essentially, folding long thin strips of sheet material is like a fusion
between 1D chains of \cite{Moteins_TRO} and the square sheet folding of
\cite{origami5,pnas,robotica}.

\paragraph{Milling tours.}

At the core of our efficient strip foldings are efficient approximation
algorithms for \emph{milling} a grid polyhedron.
Motivated by rapid-fabrication CNC milling/cutting tools,
milling problems are typically stated in terms of a 2D region called
a ``pocket'' and a cutting tool called a ``cutter'', with the goal being to
find a path or tour for the cutter that covers the entire pocket. 
In our situation, the ``pocket'' is the surface of the grid polyhedron,
and the ``cutter'' is a unit square constrained to move from one grid square
of the surface to an (intrinsically) adjacent grid square.

The typical goals in milling problems are to minimize the length of the tour
\cite{Arkin2000}
or to minimize the number of turns in the tour \cite{Arkin2005}.
Both versions are known to be strongly NP-hard, even when the pocket is
an integral orthogonal polygon and the cutter is a unit square.
We conjecture that the minimum-turn problem remains strongly NP-hard
when the pocket is a grid polyhedron, but the minimum-length problem
turns out to be polynomial: the dual graph is planar and
4-connected and thus Hamiltonian \cite{Bodini-Lefranc-2006},
so there is an ideal tour visiting each square exactly once,
which can be found in linear time \cite{Chiba-Nishizeki-1989}.
(This result also implies a 2-approximation for the metric of length
plus turns.)

In our situation, both length and number of turns are important,
as both influence the required length of a strip to cover the surface.
Thus we develop one algorithm that simultaneously approximates both measures.
Such results have also been achieved for 2D pockets \cite{Arkin2005};
our results are the first we know for surfaces in 3D.
%
%
Specifically,
we develop in Section~\ref{StripFolding:sec:Milling} an approximation algorithm
for computing a milling tour of a given grid polyhedron, achieving both a
$2$-approximation in length and an $8/3$-approximation in number of turns.

\paragraph{Fonts.}
To illustrate the power of strip folding, we designed a typeface, representing
each letter of the alphabet by a folding of a $1 \times x$ strip for some~$x$,
as shown in Figure~\ref{font}.
The individual-letters typeface consists of two fonts:
the unfolded font is a puzzle to figure out each letter,
while the folded font is easy to read.
These crease patterns adhere to an integer grid with orthogonal and/or
diagonal creases, but are not necessarily subpatterns of the canonical
hinge pattern.  This extra flexibility gives us control to produce folded
half-squares as desired, increasing the font's fidelity.

We have developed a web app that visualizes the font,%
\footnote{\url{http://erikdemaine.org/fonts/strip/}}
and chains letters together into one long strip folding.
Figure~\ref{title}, and
Figure~\ref{the end} at the end of the paper, show some examples.

\begin{figure}
  \centering
  \includegraphics[width=\linewidth]{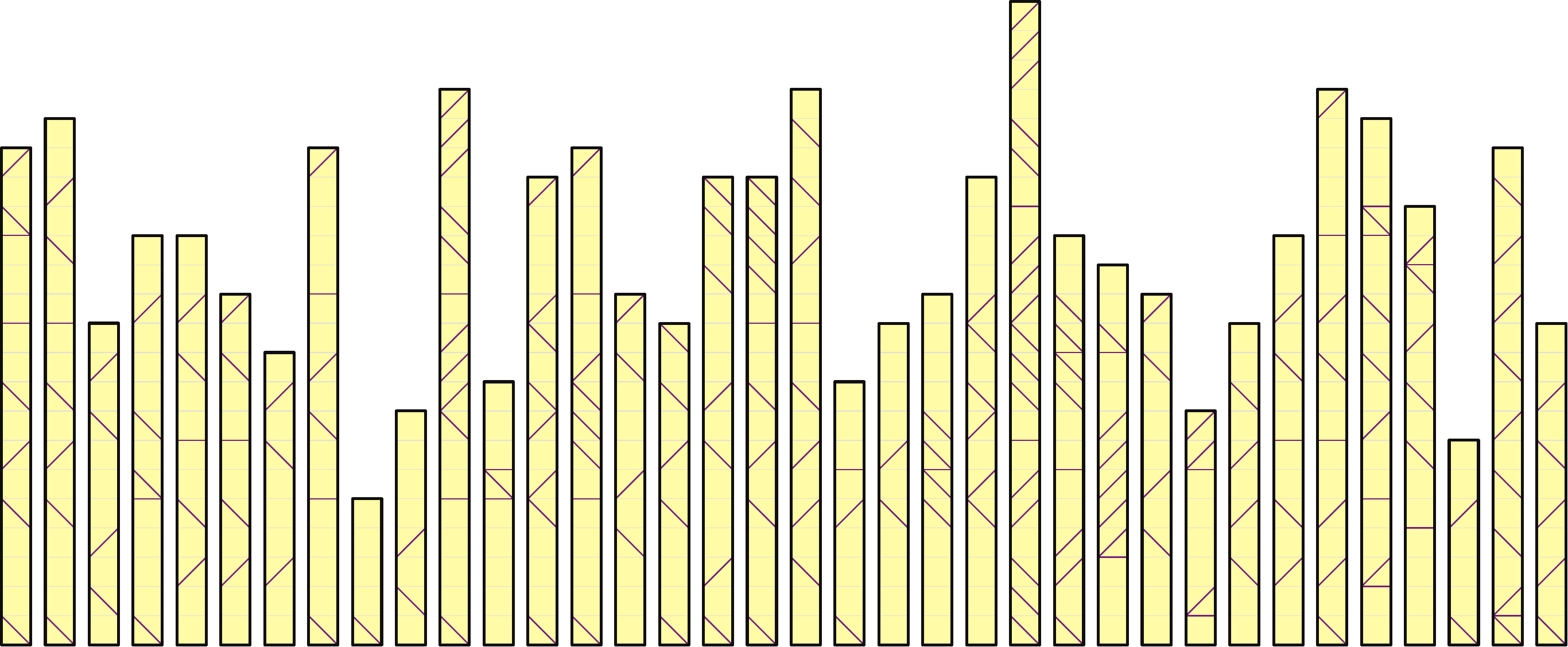}

  \bigskip

  \includegraphics[width=\linewidth]{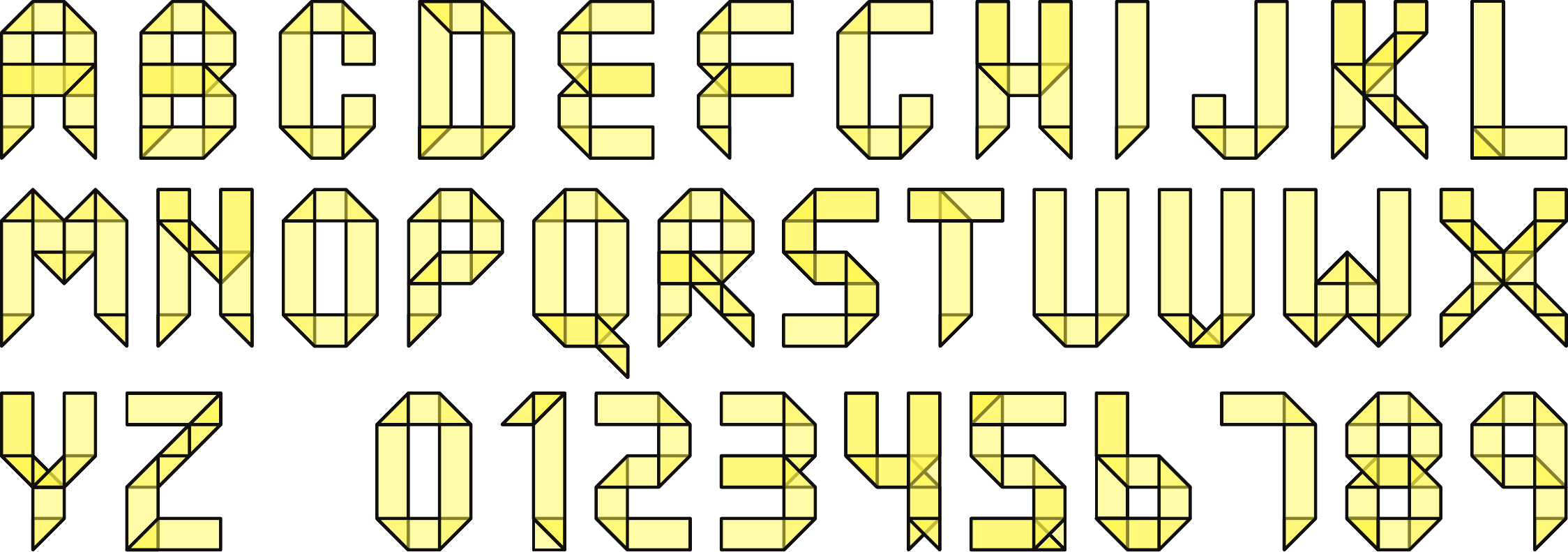}

  \caption{Strip folding of individual letters typeface, A--Z and 0--9:
    unfolded font (top) and folded font (bottom), where
    the face incident to the bottom edge remains face-up.}
  \label{font}
\end{figure}

\begin{figure}
\centering
\includegraphics[width=\linewidth]{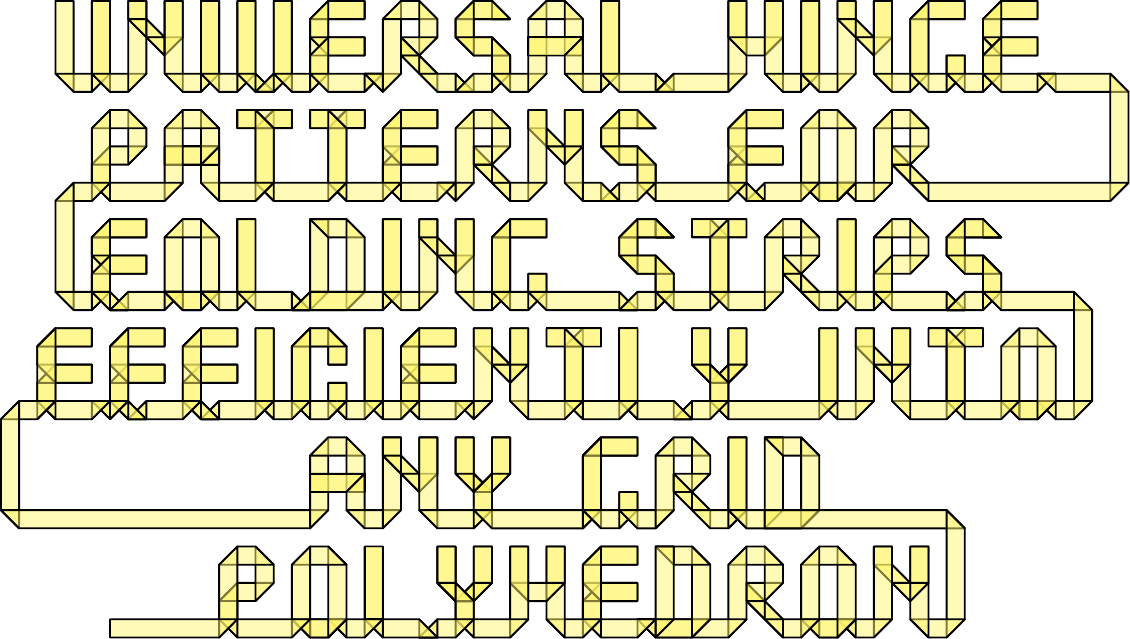}
\def\link{\url{http://erikdemaine.org/fonts/strip/}}
\caption{The paper title rendered with our web app, \link.}
\label{title}
\end{figure}

\section{Universality}
\label{Universality}

In this section, we prove that both the canonical strip and zig-zag strip of
Figure~\ref{StripFolding:fig:StripTypes}, of sufficient length,
can fold into any grid surface.  We begin with milling tours which provide
an abstract plan for routing the strip, and then turn to the details of how
to manipulate each type of strip.

\paragraph{Dual graph.}
Recall that a \emph{grid surface} consists of one or more
\emph{grid squares}---that is, squares of the 3D cube grid---glued
edge-to-edge to form a connected surface (ignoring vertex connections).
Define the \emph{dual graph} to have a dual vertex for each such grid square,
and a dual edge between the two dual vertices corresponding to any two grid
squares sharing an edge.
Our assumption of the grid surface being connected
is equivalent to the dual graph being connected.

\paragraph{Milling tours.}
A \emph{milling tour} is a (not necessarily simple) spanning cycle
in the dual graph, that is, a cycle that visits every dual vertex
at least once (but possibly more than once).
Equivalently, we can think of a milling tour as the path traced by
the center of a moving square that must cover the entire surface
while remaining on the surface, and return to its starting point.
Milling tours always exist: for example, we can double a spanning tree
of the dual graph to obtain a milling tour of length less than double the
given surface area.

At each grid square, we can characterize a milling tour as going straight,
turning, or U-turning---intrinsically on the surface---according to
which two sides of the grid square the tour enters and exits.
If the sides are opposite, the tour is \emph{straight};
if the sides are incident, the tour \emph{turns}; and if the sides
are the same, the tour \emph{U-turns}.
Intuitively, we can imagine unfolding the surface and developing the tour
into the plane, and measuring the resulting planar turn angle at the center
of the grid square.

\paragraph{Strip folding.}
To prove universality, it suffices to show that a canonical strip or zig-zag
strip can follow any milling tour and thus make any grid polyhedron.
In particular, it suffices to show how the strip can go straight, turn left,
turn right, and U-turn.
Then, in 3D, the strip would be further folded at each traversed edge of the
grid surface, to stay on the surface.  
Indeed, U-turns can be viewed as folding onto the opposite side of the same
surface, and thus are intrinsically equivalent to going straight;
hence we can focus on going straight and making left/right turns.
%

\paragraph{Canonical strip.}

Figure~\ref{StripFolding:fig:CanonicalTurns} shows how a canonical strip can turn
left or right; it goes straight without any folding.
Each turn adds $1$ to the length of the strip,
and adds $2$ layers to part of the grid square where the turn is made. 
Therefore a milling tour of length $L$ with $t$ turns of a grid
surface can be followed by a canonical strip of length $L+t$.
Furthermore, if the milling tour visits each grid square at most $c$ times,
then the strip folding has at most $3 c$ layers covering any point of the
surface.

\begin{figure}
\centering
  \subfigure[\label{StripFolding:fig:CanonicalLeftTurn}]
            {\includegraphics[scale=0.34]{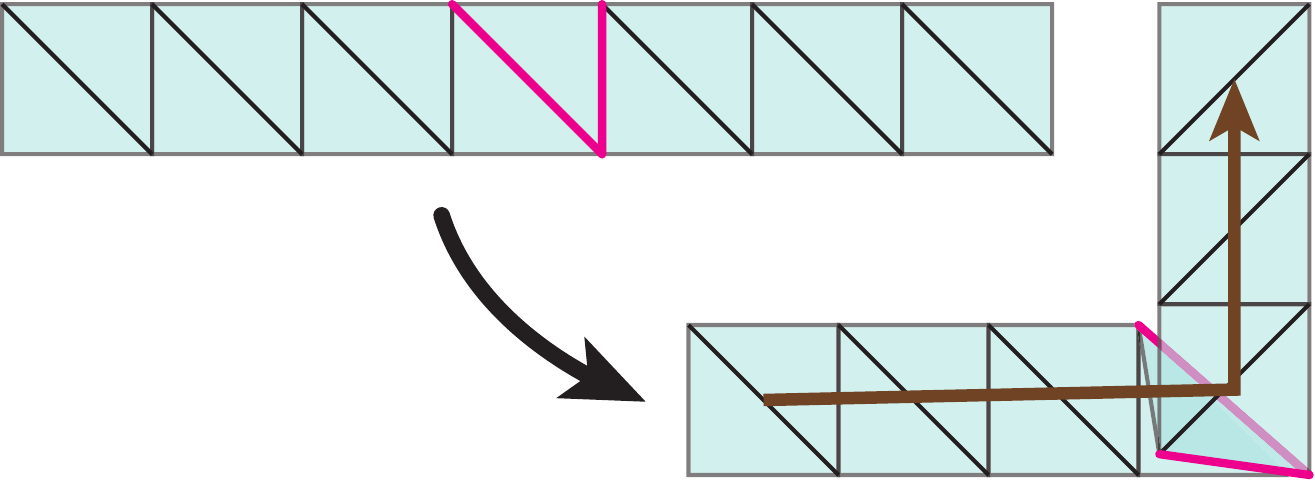}}\hfil\hfil\hfil\hfil
  \subfigure[\label{StripFolding:fig:CanonicalRightTurn}]
            {\includegraphics[scale=0.34]{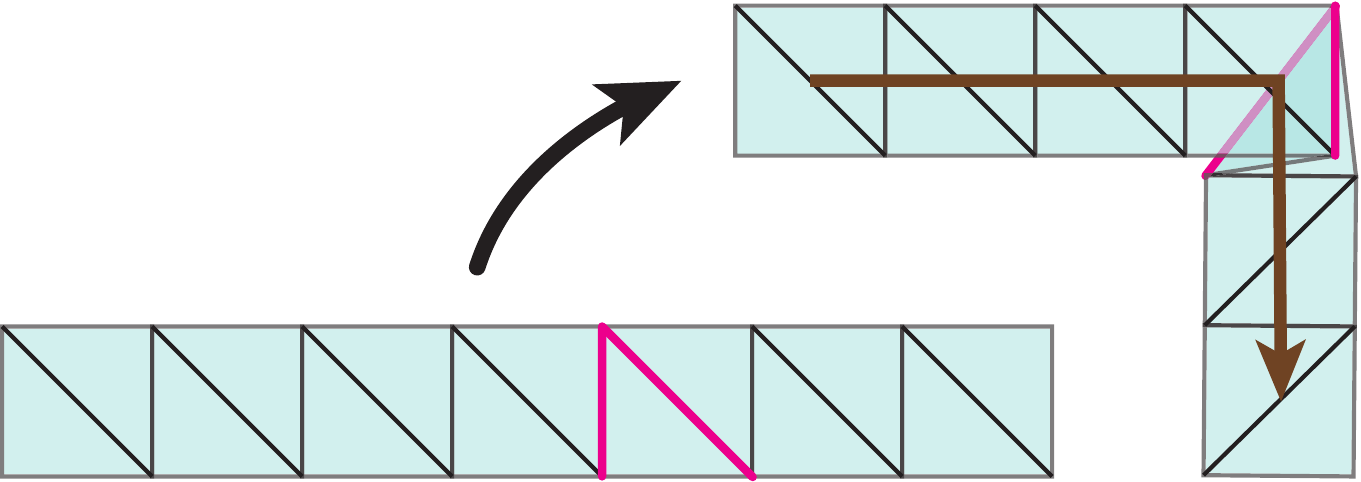}}
\caption{(a) Left and (b) right turn with a canonical strip.}
\label{StripFolding:fig:CanonicalTurns}
\end{figure}

\paragraph{Zig-zag strip.}

Figure~\ref{StripFolding:fig:ZigZagStraight} shows how to fold
a zig-zag strip in order to go straight.
In this straight portion, each square of the surface is covered
by two squares of the strip.
Figure~\ref{StripFolding:fig:ZigZagTurns} shows left and right turns.
Observe that turns require either one or three squares of the strip.
Therefore a milling tour of length $L$ with $t$ turns
can be followed by a zig-zag strip of length at most $2L+t$. 
Furthermore, if the milling tour visits each grid square at most $c$ times,
then the strip folding has at most $3 c$ layers
covering any point of the surface. 

\begin{figure}
\centering
\includegraphics[scale=0.4]{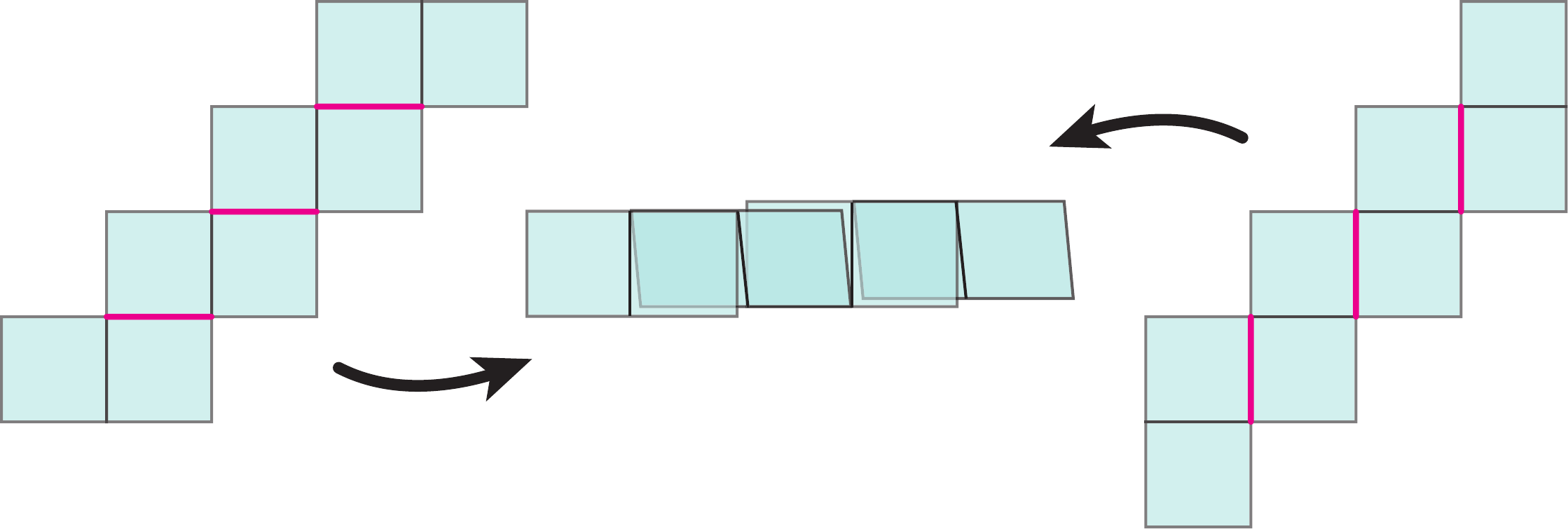}
\caption{Going straight with a zig-zag strip requires at most two unit squares per grid square.  Left and right crease patterns show two possible different parities along the strip.}
\label{StripFolding:fig:ZigZagStraight}
\end{figure}

\begin{figure}
\centering
  \subfigure[\label{StripFolding:fig:ZigZagOddTurns}]
            {\includegraphics[scale=0.3]{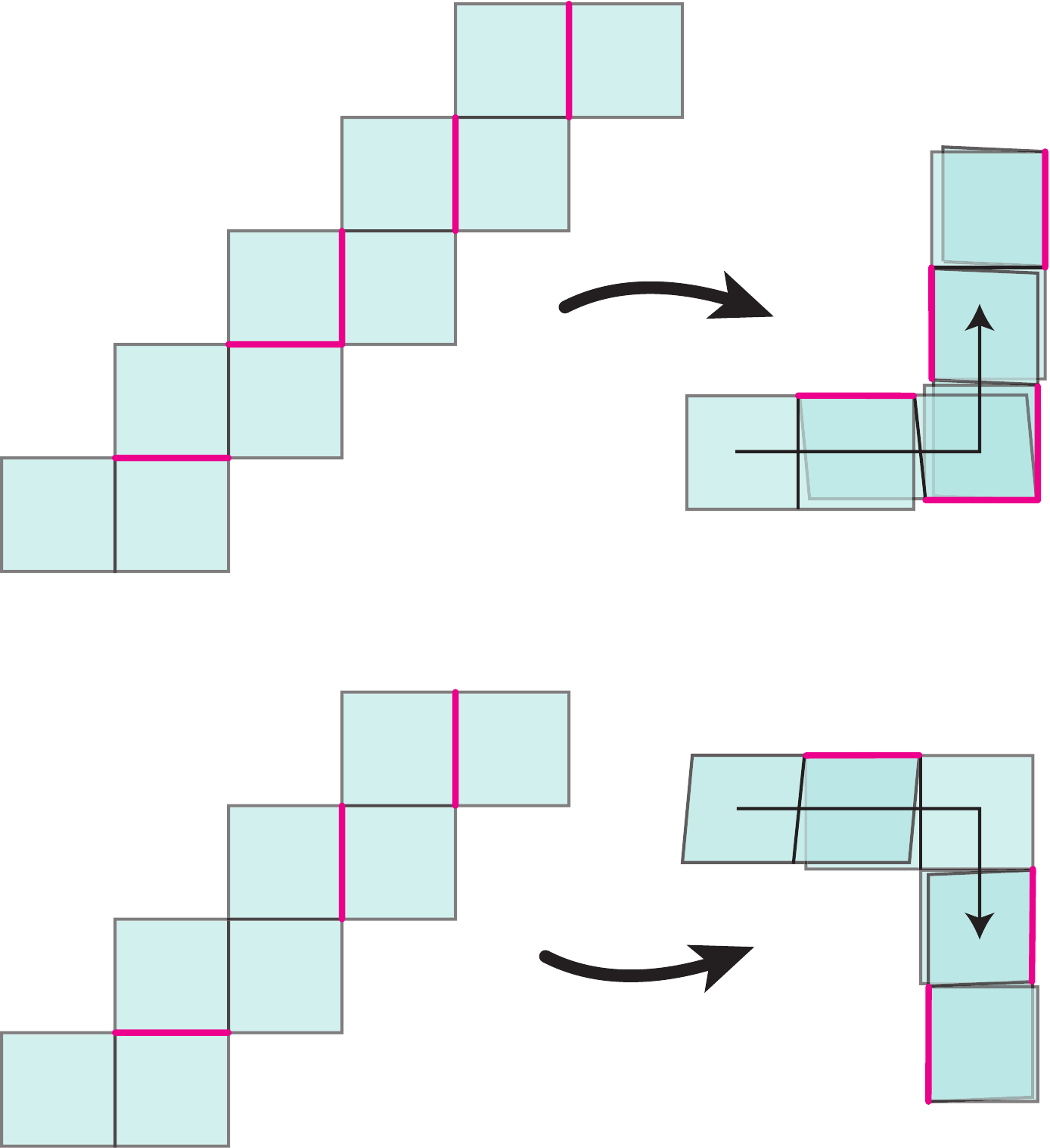}}\hfil\hfil\hfil\hfil
  \subfigure[\label{StripFolding:fig:ZigZagEvenTurns}]
            {\includegraphics[scale=0.3]{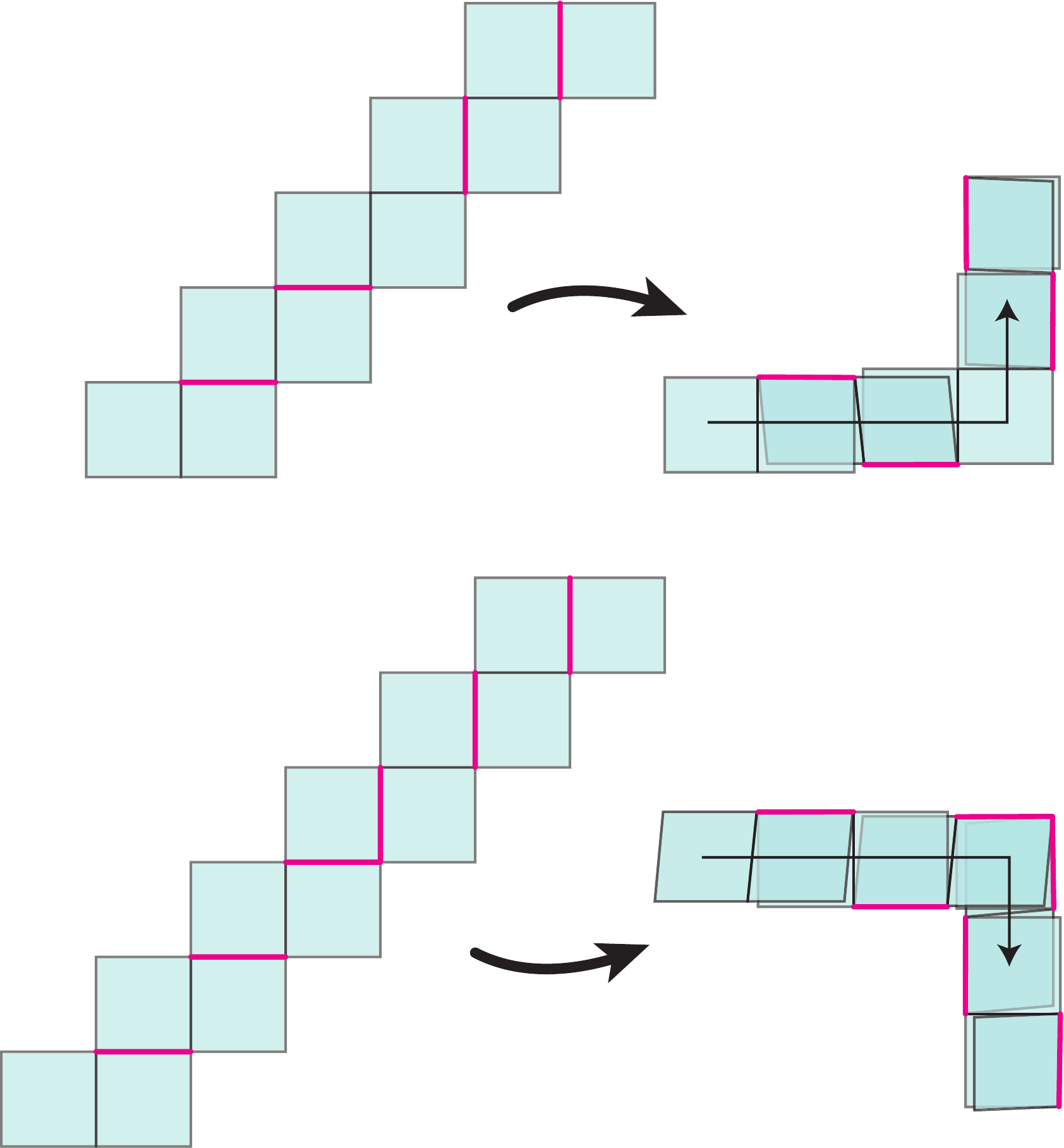}}

\caption{Turning with a zig-zag strip has two cases because of parity.
(a)~Turning left at an odd position requires three grid squares, whereas turning right requires one grid square.
(b)~Turning left at an even position requires one grid square, whereas turning right requires three grid squares.
}
\label{StripFolding:fig:ZigZagTurns}
\end{figure}

\begin{prop}
  Every grid surface of area $N$ can be folded from a
  canonical strip of length $4 N$, with at most eight layers stacked anywhere,
  and from a zig-zag strip of length $6 N$, with at most twelve layers
  stacked anywhere.
\end{prop}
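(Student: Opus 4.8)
The plan is to exhibit a single milling tour whose length $L$, number of turns $t$, and maximum number $c$ of visits to any one grid square are all controlled, and then to feed it through the canonical and zig-zag constructions developed above. I would build the tour by doubling a spanning tree. Since the grid surface is connected, its dual graph is connected, so it has a spanning tree $T$ on all $N$ dual vertices; doubling every edge of $T$ yields a connected multigraph in which every vertex has even degree, and an Euler tour of it is a closed walk through every grid square, i.e.\ a milling tour. Its length is the number of traversed edges, $2(N-1) < 2N$, so $L < 2N$.

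Next I would read off $c$ and $t$. Each grid square has four sides and each side is glued to at most one other square, so the dual graph has maximum degree $4$; hence the Euler tour passes through a square $v$ exactly $\deg_T(v) \le 4$ times, giving $c \le 4$. Every one of the $L$ steps of the tour is straight, a turn, or a U-turn, so $t \le L < 2N$ (U-turns being intrinsically straight, this also bounds the genuine turns). The two length claims are then immediate from the length formulas established above: a canonical strip needs length $L + t \le 2L < 4N$, and a zig-zag strip needs length $2L + t \le 3L < 6N$.

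It remains to bound the stacked layers, and this is where the real work lies. Plugging $c = 4$ into the blanket estimate of $3c$ layers gives $12$, which already settles the zig-zag claim but is too weak for the canonical strip. To reach the sharper bound of $8$ I would account for the stacking pass-by-pass at a single \emph{point} rather than per visit: each of the at most four passes over a square lays one base layer (four in all), and a turning pass additionally doubles back only over one triangular part. The key observation is that successive passes through a multiply visited square enter it from different sides, so the fixed diagonal of the canonical pattern lands on different diagonals of the grid square and the turn folds scatter across the square's subregions; checking that no single point can lie under more than two of these turn folds yields at most $4 + 2\cdot 2 = 8$ layers. The main obstacle is exactly this geometric bookkeeping—verifying the worst-case overlap of the up-to-four folded passes at one point—since it is precisely the feature that separates the canonical bound ($8$) from the cruder zig-zag bound ($12$).
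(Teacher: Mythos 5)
Your proposal is correct and follows essentially the same route as the paper: a doubled-spanning-tree milling tour with $L < 2N$, $t < 2N$, and at most four visits per grid square (once per side), fed into the per-move length and layer accounting for the two strip types from Section~\ref{Universality}. The one place you go beyond the paper's (very terse) proof is the eight-layer bound for the canonical strip: the paper just invokes ``the analysis above,'' whose blanket bound of $3c$ layers with $c=4$ only yields twelve, whereas you correctly note that the four possible turning passes at a quadruply-visited square deposit their two extra layers on four distinct half-square triangles, any point lying in at most two of them, giving $4 + 2\cdot 2 = 8$ --- and that finer bookkeeping is in fact needed to justify the stated constant.
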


\begin{proof}
  The doubled-spanning-tree milling tour has length $L < 2 N$,
  and in the worst case turns at every square visited, i.e., $t < 2 N$
  (counting U-turns at the leaves of the spanning tree).
  Each square gets visited by the tour at most four times (once per side).
  The bounds then follow from the analysis above.
\end{proof}

The goal in the rest of this paper is to achieve better bounds
for grid polyhedra, using more carefully chosen milling tours.
For comparison, the best approximation algorithm for milling tours in
polygons (not grid surfaces) \cite{Arkin2000} achieves a bound of
$L \leq 1.325 \, N$ which, combined with the trivial $t \leq L$ bound,
achieves a canonical-strip bound of $t+L \leq 2.65 \, N$.
We will beat this bound for grid surfaces, reducing to $2 N$.

\section{Milling Tour Approximation}
\label{StripFolding:sec:Milling}

This section presents a constant-factor approximation algorithm for
milling a (genus-0) grid polyhedron $P$ with respect to both length and turns.
Specifically, our algorithm is a $2$-approximation in length and an
$8/3$-approximation in turns.  Our milling tours also have special properties
that make them more amenable to strip folding.

Our approach is to reduce the milling problem to vertex cover in a tripartite
graph.  Then it follows that our algorithm is a $2 \alpha$-approximation
in turns, where $\alpha$ is the best approximation factor for vertex cover
in tripartite graphs. 
The best known bounds on $\alpha$ are $34/33 \leq \alpha \leq 4/3$.
Clementi et al.~\cite{Clementi1999} proved that minimum vertex cover in
tripartite graphs is not approximable within a factor smaller than
$34/33=1.\overline{03}$ unless $\text{P}=\text{NP}$.
Theorem~1 of \cite{Hochbaum1983} implies a $4/3$-approximation for minimum
weighted vertex cover for tripartite graphs (assuming we are given the
3-partition of the vertex set, which we know in our case).
Thus we use $\alpha=4/3$ below.
An improved approximation ratio $\alpha$ would improve our approximation
ratios, but may also affect the stated running times, which currently
assume use of \cite{Hochbaum1983}.

\subsection{Bands}
\label{Bands}

\begin{figure}
  \centering
  \begin{tabularx}{\linewidth}{XX}
    \vbox{\hbox{\subfigure[\label{dog1}A grid polyhedron from \cite{Moteins_TRO}]
      {\includegraphics[scale=0.47,page=1]{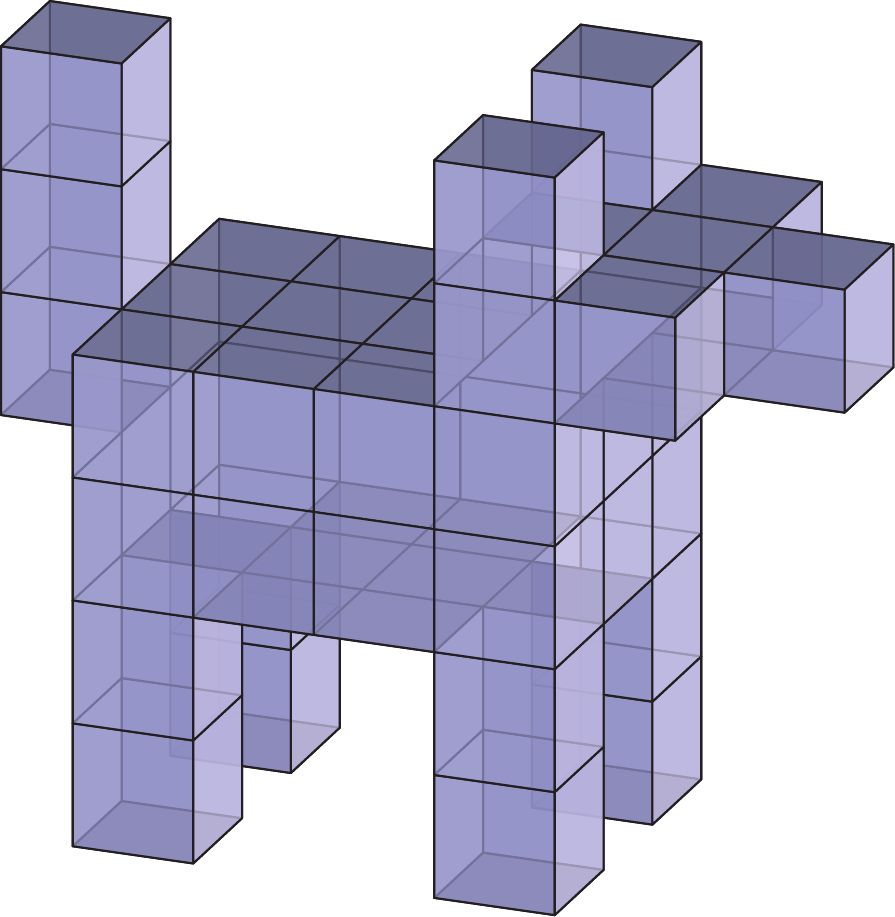}}}}&
    \vbox{\hbox{\subfigure[\label{dog2}$x$-bands]
      {\includegraphics[scale=0.47,page=3]{Figures/dog}}}}\\
    \vbox{\hbox{\subfigure[\label{dog3}$y$-bands]
      {\includegraphics[scale=0.47,page=2]{Figures/dog}}}}&
    \vbox{\hbox{\subfigure[\label{dog4}$z$-bands]
      {\includegraphics[scale=0.47,page=4]{Figures/dog}}}}
  \end{tabularx}
  \caption{Illustration of bands.}
  \label{bands}
\end{figure}

The basis for our approximation algorithms is
the notion of ``bands'' for a grid polyhedron~$P$;
refer to Figure~\ref{bands}.
Let $x_\min$ and $x_\max$ respectively be the minimum and maximum $x$
coordinates of $P$; define $y_\min, y_\max,$ $z_\min, z_\max$ analogously.
These minima and maxima have integer values because the vertices of $P$ lie on
the integer grid.
Define the $i$th \emph{$x$-slab} $S_x(i)$ to be the slab bounded by
parallel planes $x=x_\min+i$ and $x=x_\min+i+1$,
for each $i\in\{0,1,\dots, x_\max-x_\min -1\}$.
The intersection of $P$ with the $i$th $x$-slab $S_x(i)$
(assuming $i$ is in the specified range) is either a single \emph{band}
(i.e., a simple cycle of grid squares in that slab),
or a collection of such bands, which we refer to as $x$-bands.
Define $y$-bands and $z$-bands analogously.

Two bands \emph{overlap} if there is a grid square contained in both bands. 
Each grid square of $P$ is contained in precisely two bands
(e.g., if a grid square's outward normal were in the $+z$-direction,
then it would be contained in one $x$-band and one $y$-band).
Two bands $B_1$ and $B_2$ are \emph{adjacent} if they do not overlap,
and a grid square of $B_1$ shares an edge with a grid square of~$B_2$.
A \emph{band cover} for $P$ is a collection of $x$-, $y$-, and $z$-bands
that collectively cover the entire surface of~$P$.
The \emph{size} of a band cover is the number of its bands.

\subsection{Cover Bands}
\label{StripFolding:Covering Bands}

The starting point for the milling approximation algorithm is to find an
approximately minimum band cover, as the minimum band cover is a lower bound
on the number of turns in any milling tour:

\begin{prop}\label{StripFolding:prop:BCLowerBoundOnTurns}
{\rm \cite[Lemma~4.9]{Arkin2005}}
The size of a minimum band cover of a grid polyhedron $P$
is a lower bound on the number of turns in any milling tour of~$P$.
\end{prop}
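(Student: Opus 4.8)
The plan is to prove the bound directly by a transformation argument: from an \emph{arbitrary} milling tour with $t$ turns, I will construct a band cover using at most $t$ bands. Since this construction works for every milling tour, the minimum band cover can be no larger than the number of turns in any tour, which is exactly the asserted lower bound.

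First I would establish the key structural fact that \emph{going straight is the same as staying within a single band}. Recall that a band is the intersection of $P$ with a slab, and that this intersection is a simple cycle of grid squares; consecutive squares of a band meet along an edge across which the intrinsic motion is straight, even when the surface bends at a convex or reflex edge of $P$, since the two squares still lie in the common slab. Conversely, a maximal intrinsic straight cycle on the surface is precisely a band. Hence any maximal straight run of the tour never leaves the band it currently occupies. I would also observe that a U-turn merely reverses direction along the current band and therefore likewise stays inside it; thus the only way for the tour to pass from one band to a perpendicular one is to turn.

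Next I would cut the (cyclic) tour at each of its $t$ turns. This partitions the tour into $t$ arcs, each of which only goes straight and U-turns, and so by the structural fact each arc lies entirely within a single band. Assigning to each arc a band containing it yields a collection of at most $t$ bands. Because the tour is spanning---it visits every grid square of $P$ at least once---the union of these arcs covers the whole surface, and so the chosen bands form a band cover of size at most $t$. The minimum band cover therefore has size at most $t$ for every milling tour, which is the claim. The degenerate possibility $t=0$ needs no separate treatment: a single band, being $P$ intersected with one slab, always omits the grid squares whose outward normal is parallel to that slab's axis, and so can never cover a closed grid polyhedron; thus every spanning tour turns at least once.

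I expect the main obstacle to be the structural equivalence in the second step---carefully verifying, in the intrinsic surface metric, that ``straight'' coincides exactly with ``remaining in a band,'' including the behavior at the genuinely non-planar $3$D edges of $P$, and confirming that U-turns do not switch bands. Once that correspondence is pinned down, the cut-at-turns counting argument is immediate, and the fact that each arc goes straight means it can only be one of the three band types meeting its squares, so the assignment to a covering band is well defined.
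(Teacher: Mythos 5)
Your proof is correct and takes essentially the same route as the paper's (much terser) argument: the paper simply says to ``extend the edges of the tour into bands,'' which is exactly your decomposition of the tour into at most $t$ maximal straight arcs, each contained in a single band, whose union covers $P$ because the tour is spanning. Your elaborations---the straight-run/band correspondence, the handling of U-turns, and the impossibility of $t=0$---are accurate fillings-in of details the paper leaves implicit.
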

\begin{proof}
Consider a milling tour of $P$ with $t$ turns.
Extend the edges of the tour of $P$ into bands.
This yields a band cover of size at most~$t$.
\end{proof}


Next we describe how to find a near-optimal band cover.
Consider the graph $G_P$ with one vertex per band of a grid polyhedron~$P$,
connecting two vertices by an edge if their corresponding bands overlap.
It turns out that an (approximately minimum) vertex cover in $G_P$
will give us an (approximately minimum) band cover in~$P$:

\begin{prop}\label{StripFolding:prop:VertexCoverBandCover}
A vertex cover for $G_P$ induces a band cover of the same size and vice versa.
\end{prop}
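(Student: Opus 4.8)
The plan is to prove both directions of the correspondence by exhibiting the natural bijection between bands and vertices of $G_P$ and showing it preserves the covering property. Recall that $G_P$ has exactly one vertex per band, so a set of bands and the corresponding set of vertices automatically have the same size; the only content is that ``vertex cover'' translates to ``band cover'' and conversely. The key fact I would lean on is the one established in Section~\ref{Bands}: every grid square of $P$ lies in \emph{precisely two} bands. This ``each square in exactly two bands'' property is exactly the structure that makes the vertex-cover translation work, so it will be the crux of both directions.

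First I would prove that a vertex cover induces a band cover. Let $C$ be a vertex cover of $G_P$, and let $\mathcal{B}$ be the corresponding collection of bands. I must show $\mathcal{B}$ covers every grid square of $P$. Take any grid square $s$; by the structural fact it lies in exactly two bands $B_1$ and $B_2$, which overlap at $s$ and hence are joined by an edge in $G_P$. Since $C$ is a vertex cover, at least one of the two endpoints $B_1, B_2$ lies in $C$, so at least one of them lies in $\mathcal{B}$ and therefore contains $s$. As $s$ was arbitrary, $\mathcal{B}$ covers all of $P$, so it is a band cover, and it has the same size as $C$ since the band-to-vertex correspondence is a bijection.

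For the converse, suppose $\mathcal{B}$ is a band cover and let $C$ be the corresponding vertex set. I must show $C$ is a vertex cover of $G_P$, i.e.\ that every edge of $G_P$ has an endpoint in $C$. An edge of $G_P$ joins two bands $B_1, B_2$ precisely when they overlap, meaning they share some grid square $s$. Because $\mathcal{B}$ covers $P$, the square $s$ is contained in some band of $\mathcal{B}$; but by the same structural fact $s$ lies in exactly the two bands $B_1$ and $B_2$, so the covering band must be one of these two. Hence $B_1$ or $B_2$ is in $\mathcal{B}$, i.e.\ an endpoint of the edge lies in $C$, establishing that $C$ is a vertex cover of the same size.

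The two directions are genuinely symmetric, and neither involves any real calculation; the entire argument hinges on the ``exactly two bands per square'' property, which converts the square-covering condition into the edge-covering condition of vertex cover. The main thing to be careful about is the correct reading of the edge relation in $G_P$ (edges record \emph{overlap}, i.e.\ a shared square) versus the \emph{adjacency} relation defined earlier (non-overlapping bands sharing an edge); using the adjacency relation instead of overlap would break the correspondence, so I would flag that $G_P$ is defined via overlap and that this is exactly what pairs each square's two containing bands into an edge.
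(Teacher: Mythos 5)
Your proof is correct and follows essentially the same route as the paper's: both directions rest on the fact that each grid square lies in exactly two bands, which pairs the square-covering condition with the edge-covering condition of vertex cover in $G_P$. Your added remark distinguishing the overlap relation (which defines the edges of $G_P$) from the adjacency relation is a sensible precaution but does not change the argument.
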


\begin{proof}
Because the vertices of $G_P$ correspond to bands, it suffices to show that a set of bands $S$ covers all
the grid squares of $P$ if and only if the corresponding vertex set $V_S$ covers all the edges of $G_P$.
First suppose that $V_S$ covers the edges of $G_P$. 
Observe that a grid square is contained in two bands and those bands overlap, so there is a corresponding
edge in $G_P$.  Because $V_S$ covers the edge in $G_P$, $S$ must cover the grid square.

Conversely, suppose $S$ covers all the grid squares of $P$.  Any edge of $G_P$ corresponds to two
overlapping bands, which overlap in a grid square (in fact, they overlap in at least two grid squares because $P$ has genus $0$).
Because the grid square is in exactly two bands, and $S$ covers the grid square, then $V_S$ covers
the corresponding edge in $G_P$.  
%
\end{proof}

Because the bands fall into three classes ($x$-, $y$-, and $z$-),
with no overlapping bands within a single class,
$G_P$ is tripartite.
Hence we can use an $\alpha$-approximation algorithm for vertex cover
in tripartite graphs to find an $\alpha$-approximate vertex cover in $G_P$
and thus an $\alpha$-approximate band cover of~$P$.


\subsection{Connected Bands}

Our next goal will be to efficiently tour the bands in the cover.
Given a band cover $S$ for a grid polyhedron $P$, define the \emph{band graph} $G_S$ to be
the subgraph of $G_P$ induced by the subset of vertices corresponding to~$S$. 
We will construct a tour of the bands $S$ based on a spanning tree of $G_S$. 
Our first step is thus to show that $G_S$ is connected
(Lemma~\ref{StripFolding:lem:connectivity} below).
We do so by showing that adjacent bands
(as defined in Section~\ref{Bands})
are in the same connected component of~$G_S$,
using the following lemma of Genc \cite{Genc2008}:

\begin{lemma} \label{genc lemma}
  {\rm \cite{Genc2008}}%
  \footnote{Genc \cite{Genc2008} uses somewhat different terminology
    to state this lemma: ``straight cycles in the dual graph'' are our bands,
    and ``crossing'' is our overlapping.  The induced subgraph is also defined
    directly, instead of as an induced subgraph of~$G_P$.}
  For any band $B$ in a grid polyhedron~$P$,
  let $N_B$ be the bands of $P$ overlapping~$B$.
  (Equivalently, $N_B$ is the set of neighbors of $B$ in~$G_P$).
  Then the subgraph of $G_P$ induced by $N_B$ is connected.
\end{lemma}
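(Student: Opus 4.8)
The plan is to use the genus-$0$ hypothesis topologically. Since $P$ is a grid polyhedron, its surface is a sphere and the band $B$ is an embedded annulus; its two boundary circles $C_1$ and $C_2$ cut the sphere into two disks $D_1$ and $D_2$. I would first check that every band $b \in N_B$ crosses $B$ ``straight across'': where $b$ meets a square $s$ of $B$, it enters and leaves $s$ through the two edges of $s$ lying on $C_1$ and $C_2$ (the ``rails'' of $B$), since $b$ is a straight cycle and the other, opposite, pair of edges of $s$ is used by $B$ itself. Consequently $b$ meets the open disks $D_1 \cup D_2$ in a family of arcs whose endpoints lie on $C_1 \cup C_2$. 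Moreover two distinct bands of $N_B$ can never share a square of $B$, because each square of $B$ lies in $B$ and in exactly one further band; so two members of $N_B$ overlap if and only if a pair of their arcs crosses inside $D_1$ or inside $D_2$.

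Next I would reformulate overlap as a chord-diagram crossing. List the squares of $B$ cyclically as $s_1, \dots, s_n$; each contributes one crossing point on $C_1$, and these points occur on $C_1$ in the same cyclic order. Each band of $N_B$ pairs up the points it passes through, giving chords of the disk, and (by the Jordan curve theorem inside the disk) two chords whose endpoints interleave must cross, i.e.\ their bands overlap. A key structural fact I would establish first is that no band of $N_B$ passes through two cyclically consecutive squares $s_i, s_{i+1}$: a short case analysis of how $B$ proceeds across their shared edge shows that the two ``other'' bands through $s_i$ and $s_{i+1}$ are always distinct (parallel bands in adjacent slabs when $B$ goes straight, and bands of different axes when $B$ turns), while any single band containing both $s_i$ and $s_{i+1}$ would have to coincide with one of these or with $B$. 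Hence no chord joins adjacent points, which rules out the trivial ways the crossing graph could fall apart.

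With this in hand, connectivity of the subgraph induced by $N_B$ amounts to the statement that the resulting chord diagram is \emph{indecomposable} (its crossing graph is connected). I would argue this by contradiction: a partition of $N_B$ into two nonempty parts with no overlap between them exhibits two families of simple closed curves on the sphere, all transverse to $B$, such that no band in one family meets any band in the other. I would then use planarity (genus~$0$) together with the no-adjacent-chord property to show that such a separation forces a \emph{pinch} of $B$---a pair of squares splitting its cyclic sequence so that no band crosses from one resulting arc to the other---and rule this out by a local analysis of the grid vertex figures at the pinch.

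The hard part will be the flat (straight) case. When $B$ runs straight across a planar region, consecutive neighbor bands are genuinely parallel bands in adjacent slabs and do not overlap, so nothing local connects them; connectivity must instead come from the global fact that, on a sphere, the transverse bands are forced to reconnect as they wrap around the closed surface (just as the left/right bands of a long box meet again at its end caps). Making this global reconnection precise---tracking how each transverse band closes up inside the disks $D_1$ and $D_2$---is the technical heart of the argument, and it is exactly here that genus~$0$ is essential: on a torus the transverse bands could run off around the other handle and never meet again, so the lemma would fail.
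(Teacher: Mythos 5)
You should first note that the paper does not actually prove this lemma: it is imported verbatim from Genc~\cite{Genc2008} (see the footnote translating his terminology), so there is no in-paper proof to compare against. Judged on its own, your proposal sets up the right picture and gets the preliminary facts right: $B$ is an embedded annulus whose boundary circles bound two disks (this is where genus~$0$ enters); each square of $B$ lies in exactly one band of $N_B$, which crosses $B$ transversally there; two bands of $N_B$ overlap exactly when their arcs cross inside $D_1$ or $D_2$; and the perpendicular bands through consecutive squares $s_i,s_{i+1}$ of $B$ are always distinct (parallel bands in adjacent slabs in the straight case, bands of different axes in the turning case). All of these check out.

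The genuine gap is that the one step that actually constitutes the lemma---connectivity of the crossing graph of the resulting arc diagram---is never argued. You reduce it to ``indecomposability of the chord diagram,'' propose to derive a ``pinch'' of $B$ from a hypothetical separation and to ``rule this out by a local analysis of the grid vertex figures,'' and then explicitly concede that making the global reconnection precise ``is the technical heart of the argument.'' That heart is missing, and the one structural fact you do establish (no chord joins two cyclically consecutive points of $C_1$) is demonstrably insufficient: with $8$ points on a circle, the chords $\{1,3\},\{2,4\},\{5,7\},\{6,8\}$ join no adjacent points yet have a disconnected crossing graph. So some further geometric input specific to grid polyhedra---which bands can close up inside $D_1$ and $D_2$, and how the three axis classes interact---must be identified and proved, and your proposal does not say what it is. You have also not addressed the bookkeeping that a single band of $N_B$ may cross $B$ many times and thus contribute several chords to each disk, so the object whose connectivity is needed is the crossing graph on bands, not on chords. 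As it stands the proposal is a plausible plan with correct reductions, not a proof; to complete it you would need to supply the separation-implies-contradiction argument in full, or follow Genc's original proof.
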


\begin{lemma}\label{StripFolding:lem:connectivity}
If $S$ is a band cover for a grid polyhedron $P$, then the graph $G_{S}$ is connected.
\end{lemma}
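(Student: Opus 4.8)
The plan is to reduce connectivity of $G_S$ to two facts: first, that a weaker graph on $S$---in which we join two bands whenever they \emph{either} overlap \emph{or} are adjacent---is connected; and second, that whenever two bands of $S$ are merely adjacent, they already lie in the same connected component of $G_S$ (the overlap-only graph). Granting both, every edge of the weaker graph joins two bands sharing a common $G_S$-component, so the single component of the weaker graph forces $G_S$ to be connected as well. The second fact is where Lemma~\ref{genc lemma} enters, and it is the heart of the argument.

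For the first fact I would argue by contradiction. Suppose $S$ splits as $S_1 \sqcup S_2$ with no band of $S_1$ overlapping or adjacent to a band of $S_2$, and let $R_1, R_2$ be the grid squares covered by $S_1, S_2$. A square in both would lie in a band of $S_1$ and a band of $S_2$; since each square lies in exactly two bands, these would be two overlapping bands across the partition, which is excluded, so $R_1$ and $R_2$ are disjoint. As $S$ is a cover they exhaust the surface, and both are nonempty, so connectivity of the dual graph yields edge-adjacent squares $q_1 \in R_1$ and $q_2 \in R_2$. Choosing bands $B_1' \in S_1$, $B_2' \in S_2$ containing them, the pair $q_1, q_2$ witnesses that $B_1'$ and $B_2'$ either overlap or (being non-overlapping with edge-adjacent squares) are adjacent---contradicting the split. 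A preliminary observation I would record first is that any two edge-adjacent grid squares share a common band: the shared edge lies in exactly one of the two perpendicular edge-pairs of each square, and the band passing through that edge steps from one square into the other.

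The main obstacle is the second fact, that adjacent bands $B_1, B_2 \in S$ lie in one $G_S$-component, because Lemma~\ref{genc lemma} delivers connectivity only inside $G_P$, not inside the induced subgraph $G_S$. Using the preliminary observation, the edge-adjacent witnessing squares $s_1 \in B_1$, $s_2 \in B_2$ share a band $B^*$, and since $B_1, B_2$ do not overlap one checks $B^* \neq B_1, B_2$, so $B^*$ overlaps both; hence $B_1, B_2 \in N_{B^*}$. If $B^* \in S$, then $B_1 - B^* - B_2$ is a path in $G_S$ and we are done. The crux is the case $B^* \notin S$, where I claim the cover condition forces \emph{every} band of $N_{B^*}$ into $S$: any $D \in N_{B^*}$ shares a square $q$ with $B^*$, and $q$'s two bands are exactly $D$ and $B^*$, so since $S$ must cover $q$ while $B^* \notin S$, necessarily $D \in S$. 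Thus $N_{B^*} \subseteq S$, and the $G_P$-path from $B_1$ to $B_2$ through $N_{B^*}$ guaranteed by Lemma~\ref{genc lemma} already lies entirely in $G_S$. This observation---that excluding the connecting band $B^*$ from the cover pulls its whole overlap-neighborhood into the cover---is precisely what bridges Genc's $G_P$-level statement to the $G_S$-level conclusion, and I expect it to be the step requiring the most care.
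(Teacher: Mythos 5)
Your proposal is correct and follows essentially the same route as the paper: both pass through the auxiliary ``overlap-or-adjacent'' graph on $S$ (whose connectivity comes from connectivity of the surface), and both bridge an adjacency edge $B_1$--$B_2$ via the common band $B^*$ through the two edge-adjacent squares, observing that if $B^* \notin S$ then the cover condition forces all of $N_{B^*}$ into $S$ so that Genc's lemma yields a path inside $G_S$. Your write-up is in fact slightly more careful than the paper's at two points it leaves implicit---the existence of the common band through two edge-adjacent squares, and the detailed contradiction argument for connectivity of the auxiliary graph.
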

\begin{proof}
First define an auxiliary graph $H_S$ which has edges between adjacent bands in addition to edges between overlapping bands.  
Because the surface of $P$ is connected, $H_S$ is connected, because there is
an (orthogonal) path on the surface of $P$ between any two bands and this path
induces a sequence of edges in~$H_S$.  
We will argue that $G_S$ has the same connectivity as $H_S$,
and thus is connected.

Now consider any pair of adjacent bands $B_1$ and $B_2$ in~$S$,
forming an edge in $H_S$ but not~$G_S$.
Refer to Figure~\ref{B123}.
Because $B_1$ and $B_2$ are adjacent, there exists grid squares $g_1\in B_1$ and $g_2 \in B_2$ which share an edge $e$.
Let $B_3$ be the band through grid squares $g_1$ and $g_2$.  
If $B_3$ is in the band cover,
then the vertices corresponding to $B_1$ and $B_2$ are in the same connected component of $G_S$ because $B_3$ overlaps both $B_1$ and~$B_2$.
Thus suppose that $B_3$ is not in~$S$.  
Then each of the bands overlapping $B_3$ (excluding $B_3$ itself) must be in $S$; otherwise there would be an uncovered grid square of~$B_3$.  
Let $N_3$ denote this set of bands.
By Lemma~\ref{genc lemma}, the subgraph $G_P[N_3]$ of $G_P$ induced by $N_3$
is connected.  Now because $G_P[N_3]$ is also a subgraph of~$G_S$, 
we have that the vertices corresponding to $B_1$ and $B_2$ are in the same connected component of~$G_S$.
\end{proof}

\begin{figure}
  \centering
  \includegraphics[scale=0.75]{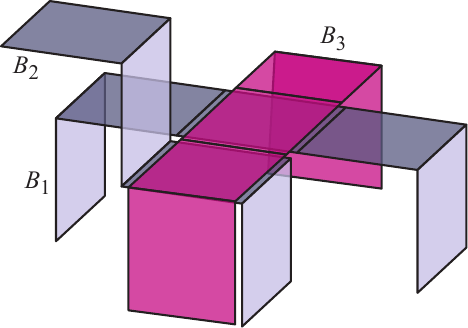}
  \caption{Bands $B_1$, $B_2$, and $B_3$ in the proof of
    Lemma~\ref{StripFolding:lem:connectivity}.}
  \label{B123}
\end{figure}


\subsection{Band Tour}

Now we can present our algorithm for transforming a band
cover into an efficient milling tour.

\begin{theorem}
\label{StripFolding:thm:MillingPseudopoly}
Let $P$ be a grid polyhedron with $N$ grid squares.
In $O(N^2 \log N)$ time, we can find a milling tour of $P$ that is
a $2$-approximation in length and an $8/3$-approximation
(or more generally, a $2 \alpha$-approximation) in turns.
\end{theorem}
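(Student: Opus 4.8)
The plan is to reduce the construction of a good milling tour to finding a small band cover---which we already know how to do via tripartite vertex cover---and then to stitch the chosen bands together into a single tour while paying only a controlled number of extra turns.

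First I would run the band-cover machinery: build the overlap graph $G_P$, which is tripartite, and compute an $\alpha$-approximate minimum vertex cover using the algorithm of \cite{Hochbaum1983} with $\alpha = 4/3$. By Proposition~\ref{StripFolding:prop:VertexCoverBandCover} this yields an $\alpha$-approximate minimum band cover $S$, and by Proposition~\ref{StripFolding:prop:BCLowerBoundOnTurns} the size of a minimum band cover lower-bounds the number of turns of any milling tour, so $|S| \le \alpha\,\mathrm{OPT}_{\mathrm{turns}}$. The cover $S$ also controls length for free: since every grid square lies in exactly two bands, it lies in at most two bands of $S$, so the total size $\sum_{B \in S}|B|$ of the chosen bands is at most $2N$, while any milling tour has length at least $N$ (it must visit all $N$ squares). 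Touring each band of $S$ once therefore costs length at most $2N \le 2\,\mathrm{OPT}_{\mathrm{length}}$, and covers everything because $S$ is a cover.

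Second I would turn the collection of band cycles into one tour. Since $S$ is a band cover, Lemma~\ref{StripFolding:lem:connectivity} guarantees that $G_S$ is connected, so I take a spanning tree $T$ of $G_S$; it has $|S|-1$ edges, each corresponding to a pair of overlapping bands that cross transversally (one band runs ``horizontally'' and the other ``vertically'' through any shared grid square). Traversed in isolation, a single band is an intrinsically straight cycle and so contributes no turns. For each tree edge I perform a crossing swap at one shared grid square $g$: instead of letting the two bands pass straight through $g$ on opposite side-pairs, I reconnect the four incident edges into two adjacent (perpendicular) side-pairs, which merges the two cycles through $g$ into one and converts the two straight passages into exactly two turns. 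Because distinct tree edges correspond to distinct band pairs, and no grid square lies in three bands, the $|S|-1$ swaps occur at distinct squares and are independent local modifications; performed along $T$, each joins two distinct cycle-components (as $T$ is acyclic), so they reduce the number of cycles from $|S|$ to $1$ without changing length. The result is a single milling tour with exactly $2(|S|-1)$ turns and length $\sum_{B\in S}|B| \le 2N$.

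Combining the bounds gives the claimed factors: length at most $2N \le 2\,\mathrm{OPT}_{\mathrm{length}}$, and turns $2(|S|-1) < 2|S| \le 2\alpha\,\mathrm{OPT}_{\mathrm{turns}}$, which is $8/3$ for $\alpha = 4/3$. For the running time, constructing the bands and $G_P$ takes $O(N\log N)$ (there are $O(N)$ bands and $O(N)$ overlaps), while the spanning tree and the merging take $O(N)$; the $O(N^2\log N)$ bound is dominated by the vertex-cover computation of \cite{Hochbaum1983}. I expect the crux to be the crossing-swap step: one must verify that each swap genuinely merges two components rather than splitting one, that the swaps do not interfere, and that each adds exactly two turns and no length, so that the turn count is pinned to $2(|S|-1)$ and the entire length/turn accounting goes through cleanly.
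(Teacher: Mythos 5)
Your proposal is correct and follows essentially the same route as the paper: an $\alpha$-approximate band cover via tripartite vertex cover, connectivity of $G_S$, a spanning tree contributing two turns per edge, and the observation that each grid square is visited at most twice. Your ``crossing swap'' at a shared grid square is exactly the paper's turn junction produced by its recursive parent--child traversal of the spanning tree, so the two constructions yield the same tour and the same bounds.
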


\begin{proof}
We start with an $\alpha$-approximation to the minimum vertex cover of $G_P$,
such as the $\alpha = 4/3$ algorithm of \cite{Hochbaum1983}.
Then we convert this vertex cover into a band cover~$S$. 
By Proposition~\ref{StripFolding:prop:VertexCoverBandCover}, $S$ is an $\alpha$-approximation
for the minimum band cover for~$P$.
Construct the band graph $G_S$, and construct a spanning tree $T$ of $G_S$,
which exists as $G_S$ is connected by Lemma~\ref{StripFolding:lem:connectivity}. 
We now describe how to obtain a milling tour by touring~$T$.
(Note that we will use ``nodes of the tree'' and ``bands of $S$'' interchangeably.)

Consider the band $r$ which is the root of $T$.
Pick an arbitrary starting grid square $p_r$ on~$r$, as well as a starting direction along band~$r$.
Now order the children of $r$ based on the order in which they are first encountered as we walk along $r$ starting at $p_r$.
We will visit the children of $r$ in this order when we follow the tour.
Let the ordered children of $r$ be denoted by $u_1,\dots, u_k$. 
Walking along~$r$, when we hit a child $u_i$ (say at grid square $p_i$) which has not yet been visited,
we turn right (or left, it doesn't matter) onto band $u_i$ and recursively visit the children of $u_i$ as we are walking,
until we eventually return back to $r$ which requires an additional turn (in the opposite direction of the turn onto $u_i$) to get back onto $r$ and resume walking along~$r$. 
Because turns only ever occur between parent and child pairs, we can bound the number of turns in terms of the
number of edges of~$T$. In particular, there are two turns per edge of $T$, one turn from the parent band
onto the child band, and one turn from the child band back onto the parent band. 
Thus the total number of turns is
$$2\,|E(T)|= 2\cdot(|S|-1)\leq 2\,\alpha\cdot \text{OPT}_{BC},$$
where $\text{OPT}_{BC}$ is the size of the minimum band cover for $P$.   
Now by Proposition~\ref{StripFolding:prop:BCLowerBoundOnTurns}, $\text{OPT}_{BC}$ is a lower bound on the number
of turns necessary, so this yields a $2\alpha$-approximation in turns.  

Next we argue that each grid square is visited at most twice,
implying a $2$-approximation in length. 
Each grid square is contained in exactly two bands of $P$, at least one of which must be in $S$ because $S$ is a band cover.
Let $g$ be a grid square of $P$.  If $g$ is contained in just one band of $S$, then it is visited only once,
namely when that band is traversed by the milling tour.  
Thus suppose $g$ is contained in two bands, $S_1$ and $S_2$, of $S$.  There are two cases to consider. 
In the first case,  neither band is a parent of the other band.
So $g$ will be visited exactly two times, once when $S_1$ is traversed, and a second time when $S_2$ is traversed.
(We call such a $g$ a \emph{straight junction} due to the way in which it is traversed, as Figure~\ref{StripFolding:fig:StraightJunction}
depicts.)
The second case to consider is when one of the bands is a parent of the other band.  Without loss of generality, assume
$S_1$ is a parent of $S_2$.  Then $g$ will be visited twice,
once when turning from $S_1$ onto $S_2$, and a second time when turning from $S_2$ back onto $S_1$. 
(We call $g$ a \emph{turn junction} in this case, as Figure~\ref{StripFolding:fig:TurnJunction} depicts.
Notice that the turn from $S_1$ onto $S_2$ is in the opposite direction of the turn from $S_2$ onto $S_1$.
This fact will be useful when we explore applications of this algorithm to strip folding, although
it is not immediately of use.)
Hence each grid square is visited at most twice.


\begin{figure}
\centering
  \subfigure[\label{StripFolding:fig:StraightJunction}]
            {\includegraphics[scale=0.7]{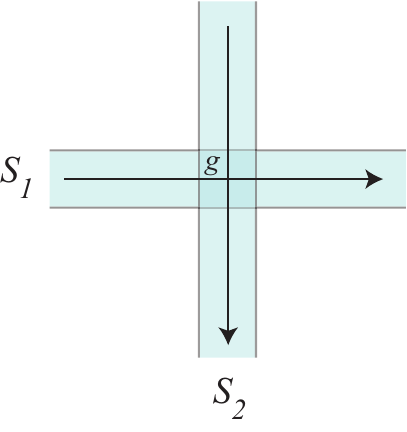}}\hfil\hfil
  \subfigure[\label{StripFolding:fig:TurnJunction}]
            {\includegraphics[scale=0.7]{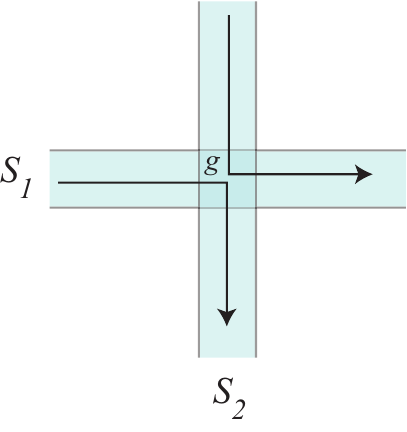}}
\caption{(a) A straight junction, and (b) a turn junction.}
\label{StripFolding:fig:Junctions}
\end{figure}

Finally we analyze the running time of this algorithm.
The time to compute an $\alpha$-approximation for the minimum
band cover of $P$ is the same as the time to compute an
$\alpha$-approximation for the minimum vertex cover for $G_P$.
Because each grid square corresponds to an edge of $G_P$ and each edge of $G_P$ corresponds to two unique grid squares, 
the number of edges $|E(G_P)|=N/2$.
The number of vertices $|V(G_P)|$ is the number of unit bands of $P$,
which is at most~$N$.
By Theorem~1 of \cite{Hochbaum1983}, the time to compute a $4/3$-approximation
for the minimum vertex cover for $G_P$ is
$O(|V(G_P)| |E(G_P)| \log\,|V(G_P)|)$, which is thus $O(N^2\log\,N)$.
This term is dominant in the running time for computing the milling tour,
establishing the theorem. 
%
\end{proof}

We now state some additional properties of any milling tour produced by the approximation algorithm of Theorem~\ref{StripFolding:thm:MillingPseudopoly},
which will be useful for later applications to strip folding in Section~\ref{StripFolding:sec:StripFolding}.
\begin{prop}
\label{StripFolding:prop:MillingTourProperties}
Let $P$ be a grid polyhedron, and consider a milling tour of $P$ obtained from the approximation algorithm
of Theorem~\ref{StripFolding:thm:MillingPseudopoly}.  Then the following properties hold:
\begin{enumerate}
\item A grid square of $P$ is either visited once, in which case it is visited by a straight part of the tour; or it is visited twice,
in one of the two configurations of Figure~\ref{StripFolding:fig:Junctions} (a straight junction or a turn junction---in particular, never a U-turn).
\item In the case of a turn junction, the length of the milling tour between the two visits to the grid square (counting
only one of the two visits to the grid square in the length measurement) is even.
\item  The tour can be modified to alternate between left and right turns (without changing its length or the number of turns).
\end{enumerate}
\end{prop}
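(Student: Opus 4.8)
The plan is to handle the three properties largely independently, building on the structure of the tour constructed in the proof of Theorem~\ref{StripFolding:thm:MillingPseudopoly}, where the only moves are traversing a band (straight through each of its squares) and turning $90^\circ$ between a parent band and a child band at their common grid square.

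For Property~1, I would observe that the classification is essentially already carried out in the proof of Theorem~\ref{StripFolding:thm:MillingPseudopoly}: each grid square lies in exactly two bands, at least one of which is in the cover $S$. If only one of them is in $S$, the square is visited exactly once, and since traversing a band passes straight through each of its squares, that single visit is straight. If both bands are in $S$, the square is visited twice; when neither band is the parent of the other it is a straight junction (two straight passes along perpendicular bands), and when one is the parent of the other it is a turn junction (two $90^\circ$ turns). The one remaining point is that a U-turn never occurs: every visit is either a straight pass along a band or a $90^\circ$ turn between a parent and a child, and in neither case does the tour exit through the side it entered.

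The key to Property~2 is the fact that \emph{every band has even length}. I would prove this by projecting an $x$-band out of the $x$-direction: each of its grid squares (having normal in $\pm y$ or $\pm z$) projects to a unit segment, and any two consecutive squares of the band share an edge parallel to the $x$-axis, so the band projects to a closed lattice walk in the $(y,z)$-plane whose number of unit steps equals the band length. Any closed lattice walk has an even number of unit steps (equal numbers of $+y$ and $-y$ steps, and of $+z$ and $-z$ steps), so the band length is even, and symmetrically for $y$- and $z$-bands. From this I would deduce, by induction on the spanning tree $T$, that the subtour rooted at any band $B$---consisting of one traversal of $B$ (even length) together with an even-length excursion into each child subtree---has even length. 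For a turn junction with parent $S_1$ and child $S_2$, the portion of the tour between its two visits is exactly the subtour rooted at $S_2$, whose even length gives Property~2.

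For Property~3, I would exploit the two freedoms noted in the proof of Theorem~\ref{StripFolding:thm:MillingPseudopoly}: the direction of each turn \emph{onto} a child band is free, while the matching turn \emph{off} that child back onto its parent is forced to be opposite. The central observation is that each tree edge contributes exactly two turns, so the number of turns strictly between the ``onto'' turn and the ``off'' turn of a single edge equals twice the number of edges in that child's subtree, which is even. Consequently, if we color the turns in cyclic tour order alternately left, right, left, right, \ldots{} (which is consistent since the total number of turns $2(|S|-1)$ is even), then the two turns of every edge receive opposite colors. I would then realize this coloring greedily while traversing the tour: at each ``onto'' turn, which is free, I set its direction to the color demanded by its position; at each ``off'' turn, which is forced opposite to its ``onto'' turn, I check that the forced direction already matches its position's color---and it does, precisely because the two turns of an edge sit an odd number of positions apart. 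The main obstacle to watch here is the interaction between these choices and the traversal order, since reversing a turn onto a child reorders that child's subtree and thus shifts the positions of later turns; this is resolved by noting that the parity fact (an even number of turns strictly between any edge's paired turns) is invariant under all such reorderings, so the greedy assignment never conflicts with a forced turn.
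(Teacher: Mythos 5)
Your proposal is correct and follows essentially the same route as the paper: Property~1 is read off from the case analysis in the proof of Theorem~\ref{StripFolding:thm:MillingPseudopoly}, Property~2 by induction over the spanning tree using the evenness of band lengths, and Property~3 from the freedom to choose each parent-to-child turn direction. You actually supply two details the paper treats as immediate---the projection argument that every band has even length, and the parity bookkeeping (an even number of turns strictly between each edge's paired turns, robust to reordering of subtrees) showing the greedy alternating assignment never conflicts with a forced return turn---both of which are correct and strengthen the published argument.
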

\begin{proof}
Property (1) has already been established in the proof of Theorem~\ref{StripFolding:thm:MillingPseudopoly} (see the
paragraph on the length calculation which bounds the number of times each grid square is visited and how
a grid square can be visited). 

The proof of Property (2) follows by induction on the number of bands in the spanning tree from which the milling tour was constructed. 
The base case of one band is trivial as a band has an even number of grid squares.  Thus suppose Property (2) holds
for all such trees with fewer than $k$ bands for some $k > 1$.  Assume we have a tree $T$ with $k$ bands. 
Consider a leaf of $T$.
Let $B_2$ be the band corresponding to the leaf, and let $B_1$ be the band corresponding to the parent of $B_2$. 
If we remove $B_2$ from $T$ we get a tree $T'$ with $k-1$ bands, so by induction Property (2) holds. 
Adding $B_2$ back, the length of the milling tour between the two visits to a grid square will either
be unaffected (in the case where the subtree rooted at the child band corresponding to the grid square does not contain $B_2$)
 or it will increase by $|B_2|$ (i.e., the number of grid squares in $B_2$) which is even.
Thus Property (2) still holds.  

Property (3) follows by the way we constructed the milling tour as a tour of a spanning tree of bands.  
We had the freedom to choose whether to turn left or right from a parent band onto a child band,
and then the turn direction from the child back onto the parent was forced to be the opposite turn direction. 
Hence we can choose the turn directions from parent bands onto child bands in such a way that maintains alternation
between left and right turns.  
\end{proof}

\subsection{Polynomial Time}
\label{StripFolding:sec:Poly}

The algorithm described above is polynomial in the surface area $N$ of the
grid polyhedron, or equivalently, polynomial in the number of unit cubes
making up the polyomino solid.  For our application to strip folding,
this is polynomial in the length of the strip, and thus sufficient for
most purposes.  On the other hand, polyhedra are often encoded as a collection
of $n$ vertices and faces, with faces possibly much larger than a unit square.
In this case, the algorithm runs in \emph{pseudopolynomial} time.

Although we do not detail the approach here, our algorithm can be modified to
run in polynomial time.  To achieve this result,
we can no longer afford to deal with unit bands directly, because their
number is polynomially related to the number $N$ of grid squares,
not the number $n$ of vertices.
To achieve polynomial time in $n$, we concisely encode the output milling tour
using ``fat'' bands rather than unit bands,
which can then be easily decoded into a tour of unit bands.
By making each band as wide as possible, their number is polynomially
related to $n$ instead of~$N$.
Details of an $O(n^3 \log n)$-time milling approximation algorithm
(with the same approximation bounds as above)
can be found in \cite{nadia-thesis}.

\section{Strip Foldings of Grid Polyhedra} \label{StripFolding:sec:StripFolding}
In this section, we show how we can use the milling tours from
Section~\ref{StripFolding:sec:Milling}
to fold a canonical strip or zig-zag strip
efficiently into a given (genus-0) grid polyhedron.
For both strip types, define the \emph{length} of a strip to be the number of
grid squares it contains; refer to Figure~\ref{StripFolding:fig:StripTypes}.
For a strip folding of a polyhedron~$P$, define the \emph{number of layers covering a point} $q$ on $P$ to be
the number of interior points of the strip that map to $q$ in the folding,
excluding \emph{crease points}, that is, points lying on a hinge that gets
folded by a nonzero angle. 
(This definition may undercount the number of layers along one-dimensional
 creases, but counts correctly at the remaining two-dimensional subset of~$P$.)
We will give bounds on the length of the strip and also on the maximum number of layers of the strip covering
any point of the polyhedron.  

%

\subsection{Canonical Strips}

The main idea for canonical strips is that Properties (1) and (3)
of Proposition~\ref{StripFolding:prop:MillingTourProperties}
allow us to make turns
as shown in Figure~\ref{StripFolding:fig:turnJunctionCanonical},
so that we do not waste an extra square of the strip per turn. 
\begin{theorem}
\label{StripFolding:thm:CanonicalFolding}
If a grid polyhedron $P$ has a milling tour $\T$ of length $L$ with $t$ turns, and $\T$ satisfies
Properties (1) and (3) of Proposition~\ref{StripFolding:prop:MillingTourProperties},
then $P$ can be covered by a folding of a canonical strip of length $L$
with at most two layers covering any point of $P$.
Furthermore, the number of hinges folded by $180^\circ$ is $t$. 
\end{theorem}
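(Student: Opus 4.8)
The plan is to route the canonical strip along the tour $\T$ so that each unit square of the strip covers exactly one visit of $\T$ to a grid square, and then to verify the three claimed quantities (length $L$, at most two layers, and exactly $t$ flat folds) by a case analysis on how $\T$ passes through each grid square. By Property~(1) of Proposition~\ref{StripFolding:prop:MillingTourProperties}, every grid square is visited either once---by a straight segment---or twice, as a straight junction or a turn junction, and never as a U-turn; these are the only cases I need to treat.

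First I would dispose of the easy cases. On a straight single-visit square the strip passes through with one layer, contributing one unit of length and no diagonal fold (the only folding is along integer-line hinges, by the $\pm 90^\circ$ dihedral angle of $P$ at a traversed edge, which is never $180^\circ$). At a straight junction (Figure~\ref{StripFolding:fig:StraightJunction}) the two visits are two independent straight passes, giving two layers and two units of length, again with no diagonal fold. So along the straight part of $\T$ the strip uses exactly one square per visit and introduces no $180^\circ$ fold.

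The crux is the turn junction (Figure~\ref{StripFolding:fig:TurnJunction}, folded as in Figure~\ref{StripFolding:fig:turnJunctionCanonical}), and this is where Properties~(1) and~(3) do the work. At such a square $g$ the two visits are turns in opposite senses. A single $90^\circ$ turn of a canonical strip is realized by a $180^\circ$ fold along the square's diagonal; this reflects the outgoing portion of the strip onto the adjacent band and leaves the folded square covering one of the two triangles of $g$ with two layers, while the complementary triangle is left uncovered---this uncovered triangle is exactly the ``wasted'' square behind the naive $L+t$ bound. My plan is to show that the second, oppositely turning visit to $g$ covers precisely the complementary triangle, again with two layers, so that the two visits together tile $g$ with exactly two layers using only the two strip squares covering the two visits, wasting nothing. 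Thus $g$ contributes two units of length (equal to its number of visits) and exactly two $180^\circ$ folds.

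The step I expect to be the main obstacle is verifying that the canonical strip---whose diagonal creases all share a single fixed orientation relative to the material---can actually realize both required turns at $g$, since the two opposite turns call for diagonals of opposite orientation in the grid frame. This is precisely where Property~(3) enters: each turn reflects the strip and flips the apparent orientation of its diagonals, and the alternation of left and right turns guaranteed by Property~(3) controls the parity of reflections between the two visits to $g$, so that the strip's fixed-orientation diagonal always coincides with the diagonal the turn requires. I would make this parity argument precise by developing $\T$ into the plane and tracking the diagonal orientation along it. Granting this, the three conclusions follow by summation: the total length is $\sum_g(\text{visits to }g)=L$; every grid square carries at most two layers (one per straight pass, or two on each triangle of a turn junction); and since diagonal $180^\circ$ folds occur exactly at turns, their number equals the number of turns, namely $t$.
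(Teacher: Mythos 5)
Your proposal is correct and follows essentially the same route as the paper: one strip square per tour visit, Property~(1) to reduce to straight passes, straight junctions, and turn junctions with no U-turns, the covering of a turn junction by two diagonally folded strip squares on complementary triangles, and Property~(3) to reconcile the strip's fixed material diagonal orientation with the required alternation of left and right turns. The only quibble is that at a turn junction the two creases lie along the \emph{same} diagonal of the grid square in the grid frame (they have opposite handedness only relative to the strip's direction of travel), but this does not affect your argument, and your accounting of length $L$, two layers, and $t$ flat folds matches the paper's.
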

\begin{proof}
Because $\T$ alternates between left and right turns (Property (3)), 
the canonical strip can follow $\T$ by making a single diagonal fold by
$180^\circ$ at each turn (instead of the more complicated folds given by
Figure~\ref{StripFolding:fig:CanonicalTurns}).
(It is the fact that the canonical strip has all the diagonal hinges in the same direction
that forces alternation between left and right turns from folding only at diagonal hinges.)
Furthermore, by Property (1), if $\T$ turns at a grid square, then the grid square is visited
 twice and must correspond to a turn junction as shown in Figure~\ref{StripFolding:fig:TurnJunction}.  
Thus we can cover turn junctions with the canonical strip as shown in Figure~\ref{StripFolding:fig:turnJunctionCanonical}.
Notice that one grid square of the strip is used per turn, and there are only two layers of the strip
covering a turn junction.  

By Property (1), if $\T$ goes straight at a grid square, then it is visited at most twice by $\T$ (and in 
the case of being visited twice, it is a straight junction as shown in Figure~\ref{StripFolding:fig:StraightJunction}).
Going straight along $\T$ uses just one grid square of the strip, so a straight junction is covered
by two layers of the strip.  The total length of the strip used is thus $L$, 
and there are at most two layers of the strip covering each point of $P$.  
The number of hinges folded by $180^\circ$ is exactly one turn in the milling tour, which is $t$; all folds made for the strip to follow the surface are by $90^\circ$ because, by Property~(1), the tour has no U-turns.
\end{proof}

\begin{corollary}
\label{StripFolding:cor:CanonicalFolding}
Let $P$ be a grid polyhedron, and let $N$ be the number of grid squares of $P$.  Then $P$ can be covered
by a folding of a canonical strip of length $2 N$, and with at most two layers covering any point of $P$.
\end{corollary}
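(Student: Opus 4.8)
The plan is to combine the milling-tour approximation algorithm of Theorem~\ref{StripFolding:thm:MillingPseudopoly} with the folding construction of Theorem~\ref{StripFolding:thm:CanonicalFolding}; the corollary is essentially immediate once the length of the tour is bounded. First I would apply Theorem~\ref{StripFolding:thm:MillingPseudopoly} to $P$ to obtain a milling tour $\T$. By Proposition~\ref{StripFolding:prop:MillingTourProperties}, every tour produced by this algorithm satisfies Properties~(1), (2), and (3); in particular it satisfies Properties~(1) and (3), which are exactly the hypotheses required by Theorem~\ref{StripFolding:thm:CanonicalFolding}.

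The only quantitative step is to bound the length $L$ of $\T$ in terms of $N$. The relevant fact, established in the proof of Theorem~\ref{StripFolding:thm:MillingPseudopoly} and restated in Property~(1), is that each of the $N$ grid squares of $P$ is visited \emph{at most twice} by $\T$ (it is either a single straight visit, a straight junction, or a turn junction, but never more). Since the length of a milling tour equals the total number of grid-square visits it makes, summing over all grid squares gives
$$
L \;=\; \sum_{\text{grid squares } g} (\text{number of visits to } g) \;\leq\; 2N.
$$

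Finally I would invoke Theorem~\ref{StripFolding:thm:CanonicalFolding} with this particular $\T$: because $\T$ has length $L \le 2N$ and satisfies Properties~(1) and (3), the theorem yields a folding of a canonical strip of length $L \le 2N$ that covers $P$ with at most two layers at every point, establishing the corollary. I do not anticipate any real obstacle here, since the result is a direct specialization: the substantive work (existence of the approximate tour, its structural properties, and the turn-junction folding that avoids wasting strip squares) has already been carried out in the cited theorem and proposition, and the length bound $L \le 2N$ follows purely from the ``visited at most twice'' property.
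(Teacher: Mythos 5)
Your proposal is correct and matches the paper's intent: the corollary is stated without a separate proof precisely because it follows immediately from Theorem~\ref{StripFolding:thm:MillingPseudopoly} (whose proof shows each grid square is visited at most twice, hence $L \le 2N$), Proposition~\ref{StripFolding:prop:MillingTourProperties}, and Theorem~\ref{StripFolding:thm:CanonicalFolding}, exactly as you argue. The only cosmetic caveat is that Property~(3) states the tour \emph{can be modified} to alternate turns, so one formally applies that modification before invoking Theorem~\ref{StripFolding:thm:CanonicalFolding}, which changes neither $L$ nor the layer count.
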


\begin{figure}
\centering
\includegraphics[scale=0.6]{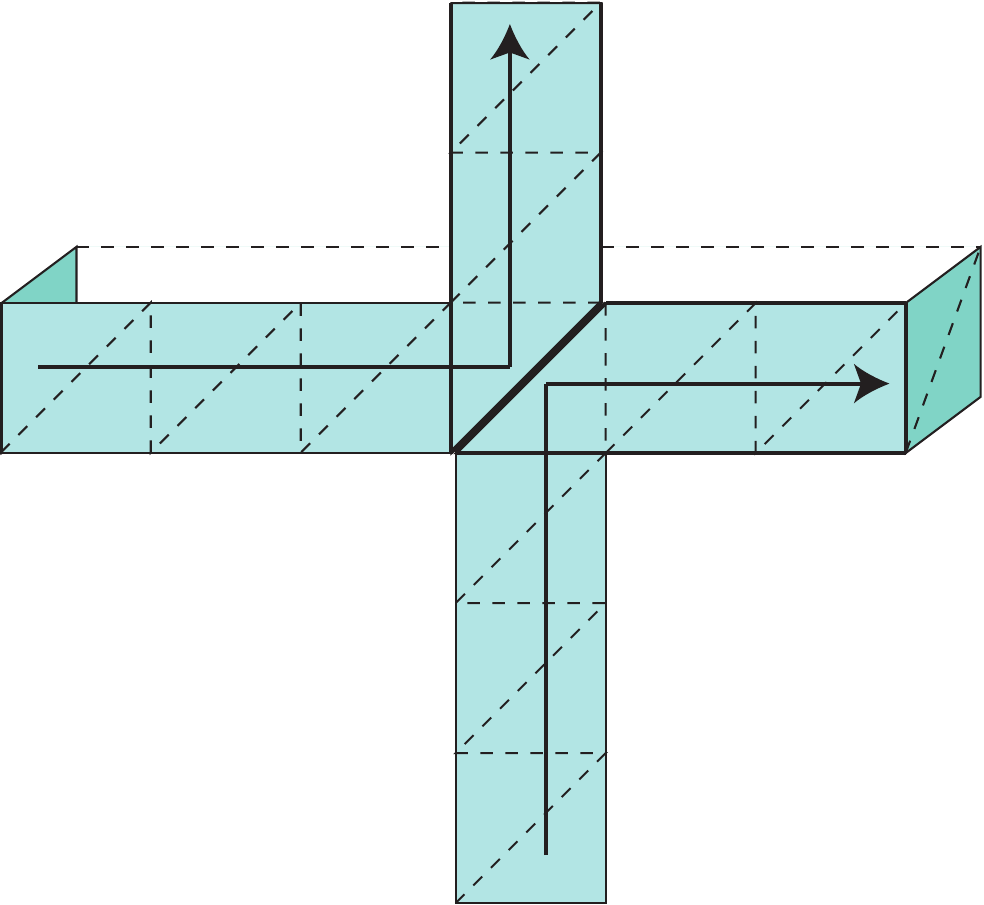}
\caption{A turn junction for a canonical strip.  The parent band is vertical and the child band is horizontal. 
The canonical strip starts out traveling upwards, turns right onto the child, visits the child and its subtree, and returns from the child
by turning left back onto the parent band. 
Notice that each turn uses just one grid square of the canonical strip, and there are only two layers of paper covering
the turn junction grid square. }
\label{StripFolding:fig:turnJunctionCanonical}
\end{figure}

\subsection{Zig-Zag Strips}

For zig-zag strips, we instead use Properties (1) and (2)
of Proposition~\ref{StripFolding:prop:MillingTourProperties}:
\begin{theorem}
\label{StripFolding:thm:ZigZagFolding}
If a grid polyhedron $P$ has a milling tour $\T$ of length $L$ with $t$ turns, and $\T$ satisfies
properties (1) and (2) of Proposition~\ref{StripFolding:prop:MillingTourProperties},
then $P$ can be covered by a folding of a zig-zag strip of length $2L$
with at most four layers covering any point of $P$.
Furthermore, the number of hinges folded by $180^\circ$ is $L$. 
\end{theorem}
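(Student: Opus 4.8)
The plan is to set up a direct correspondence between the tour $\T$ and a flat-folded zig-zag strip, and then to sharpen the earlier bound of $2L+t$ for zig-zag strips (Section~\ref{Universality}) down to exactly $2L$ by a parity argument. Recall the gadget costs from Figures~\ref{StripFolding:fig:ZigZagStraight} and~\ref{StripFolding:fig:ZigZagTurns}: going straight across a grid square consumes two unit squares of the strip (folded into a flat double layer), while a turn consumes either one or three unit squares, depending jointly on the parity of the position along the tour at which the turn is made and on whether the turn is to the left or to the right. Concretely, a turn is cheap (one square) exactly when the position parity and the turn direction ``agree'' and expensive (three squares) when they disagree. So the whole difficulty is to show that the cheap and expensive turns exactly balance, turning the naive $2L+t$ into $2L$.

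First I would argue that every turn belongs to a matched pair. By Property~(1) of Proposition~\ref{StripFolding:prop:MillingTourProperties}, $\T$ never U-turns, and any grid square at which $\T$ turns is a turn junction, visited exactly twice; moreover the two visits turn in \emph{opposite} directions (one left, one right), as noted in the construction of Theorem~\ref{StripFolding:thm:MillingPseudopoly}. By Property~(2), the length of $\T$ between the two visits to such a junction is even, so the two visits occur at tour positions of \emph{equal} parity. The crux is then a short bit of bookkeeping with the gadget rule above: since a turn is cheap precisely when position parity and direction agree, equal parity together with opposite directions forces exactly one of the two turns at the junction to be cheap and the other expensive (they cannot both agree, nor both disagree). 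Hence each turn junction consumes $1+3=4$ strip squares for its two turns, an average of two per turn. Summing, the $L-t$ straight visits consume $2(L-t)$ squares and the $t$ turns consume $2t$ squares, for a total of exactly $2L$, as claimed.

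It remains to bound the layers and count the $180^\circ$ folds. By Property~(1) each grid square is visited at most twice, so straight traversals stack at most two layers per visit and hence at most four at a straight junction; inspection of the two turn gadgets in Figure~\ref{StripFolding:fig:ZigZagTurns} shows that a turn junction is likewise covered by at most four layers, giving the bound of four everywhere. For the $180^\circ$ count, each grid-square traversal requires exactly one $180^\circ$ fold---the fold that collapses the zig-zag into its flat double layer over that square---accounting for $L$ such folds, while every remaining crease either follows an edge of $P$ or executes a turn and is by $90^\circ$ (using again that, by Property~(1), there are no U-turns). The main obstacle is the parity balancing of the preceding paragraph; a secondary point to verify is that the parities match up consistently all the way around the closed tour, which follows because $t$ is even (turns occur in junction pairs) and the per-junction parity constraint supplied by Property~(2) is exactly what is needed for the strip to close up on itself.
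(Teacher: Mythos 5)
Your proposal follows essentially the same route as the paper: Property~(2) forces the two visits to a turn junction to occur at positions of equal parity, and since the two turns there go in opposite directions, exactly one is cheap ($1$ strip square) and one is expensive ($3$ strip squares), so turns average two squares each and the total length is $2(L-t)+2t=2L$, with the layer bound and the count of $180^\circ$ folds handled the same way. One small imprecision: the $180^\circ$ folds are not distributed as ``exactly one per grid-square traversal''---a straight traversal uses one, a cheap turn uses none, and an expensive turn uses two---but the same cheap/expensive pairing gives the correct total $(L-t)+0\cdot(t/2)+2\cdot(t/2)=L$.
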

\begin{proof}
By Property~(1), $\T$ has no U-turns.  If we follow the tour with a zig-zag
strip using the turn gadgets in Figure~\ref{StripFolding:fig:ZigZagTurns},
then, at even positions, left turns require one unit square of the strip and
right turns require three unit squares, while at odd positions, left turns
require three unit squares and right turns require one unit square.
Hence the positions along $\T$ alternate between left turns being ``easy''
or being ``hard'' for the zig-zag strip, and similarly for right turns.

Consider a grid square $g$ corresponding to a turn junction of $\T$. Then $g$ is visited twice by the tour,
with the first visit and second visit having opposite turn directions (i.e., if the first visit is a left turn, the
second visit will be a right turn, and vice versa).  Now by Property (2), the length of the milling tour
between the first and second visit to $g$ (counting $g$ itself only once) is even, and so the second
visit to $g$ has the same parity as the first visit to~$g$.  Now because the turns are in opposite directions,
one of the turns costs one grid square of the zig-zag strip, while the other turn costs three grid squares
of the zig-zag strip as explained above. Hence turn junctions are covered by four grid squares of the zig-zag strip.

By Property (1), the only other types of grid squares to consider are a straight junction, or a grid square which 
is visited exactly once by going straight.  Figure~\ref{StripFolding:fig:ZigZagStraight} illustrates that going straight
at a grid square requires at most two squares of the zig-zag strip.  And because a straight junction is visited twice,
at most four squares of the zig-zag strip cover a straight junction.  
Combining this with the turn junction coverage of four, we have at most four layers of the strip covering any point of $P$. 

Now we are ready to measure the total length of the zig-zag strip used.  The length of $\T$ spent going
straight is $L-t$; thus we use $\leq 2(L-t)$ grid squares of the strip for going straight.  
The length of $\T$ spent on turns is $t$, and half of the turns cost $1$ grid square while the other half
cost $3$ grid squares, so we spend $\leq t/2 + 3t/2 =2t$ grid squares of the strip for turns. 
Thus the total number of grid squares of the strip used is at most $2(L-t) + 2t = 2L$.

Finally we bound the number of $180^\circ$ folds of the strip.  At portions of the tour going straight,
we use one $180^\circ$ fold per unit of length, and thus $L-t$ such folds are dedicated to going straight.
For turns, half of the turns require two $180^\circ$ folds, whereas the other half require no $180^\circ$ folds.
Hence we obtain a total of $L-t + 2(t/2) = L$ $180^\circ$ folds. 
\end{proof}

\begin{corollary}
\label{StripFolding:cor:ZigZagFolding}
Let $P$ be a grid polyhedron, and let $N$ be the number of grid squares of $P$.  Then $P$ can be covered
by a folding of a zig-zag strip of length $4 N$, and with at most four layers covering any point of $P$.
\end{corollary}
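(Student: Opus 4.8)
The plan is to combine the milling-tour approximation of Theorem~\ref{StripFolding:thm:MillingPseudopoly} with the zig-zag folding construction of Theorem~\ref{StripFolding:thm:ZigZagFolding}. First I would invoke Theorem~\ref{StripFolding:thm:MillingPseudopoly} on $P$ to obtain a milling tour $\T$ of some length $L$ and some number of turns~$t$. The crucial point is that, by Proposition~\ref{StripFolding:prop:MillingTourProperties}, this particular tour automatically satisfies Properties~(1) and~(2) (indeed all three properties), which are exactly the hypotheses required by Theorem~\ref{StripFolding:thm:ZigZagFolding}. So the folding machinery applies to $\T$ without any additional work.

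Next I would bound $L$ in terms of~$N$. In the proof of Theorem~\ref{StripFolding:thm:MillingPseudopoly} it is established that every grid square of $P$ is visited at most twice by~$\T$. Since there are $N$ grid squares and the length $L$ is precisely the number of grid-square visits counted with multiplicity, this immediately yields $L \leq 2N$.

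Finally I would feed $\T$ into Theorem~\ref{StripFolding:thm:ZigZagFolding}, which produces a folding of a zig-zag strip of length $2L$ covering $P$ with at most four layers everywhere. Substituting $L \leq 2N$ gives a strip of length $2L \leq 4N$, as claimed, while the four-layer bound carries over verbatim.

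There is essentially no obstacle here; the corollary is a direct specialization. The only subtle point worth stating explicitly is why we are entitled to the absolute bound $L \leq 2N$ rather than merely a bound against the optimum: the per-square ``visited at most twice'' property gives $L \leq 2N$ directly, and this is exactly what is needed, because the strip length in Theorem~\ref{StripFolding:thm:ZigZagFolding} is expressed in terms of $L$ itself. Thus the entire argument reduces to verifying that the approximation tour meets the structural hypotheses (guaranteed by Proposition~\ref{StripFolding:prop:MillingTourProperties}) and then chaining the two bounds together.
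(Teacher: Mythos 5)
Your proposal is correct and follows exactly the derivation the paper intends (the corollary is stated without an explicit proof as an immediate consequence): take the tour from Theorem~\ref{StripFolding:thm:MillingPseudopoly}, note it satisfies Properties~(1) and~(2) by Proposition~\ref{StripFolding:prop:MillingTourProperties} and has $L \leq 2N$ since each grid square is visited at most twice, then apply Theorem~\ref{StripFolding:thm:ZigZagFolding}. Your explicit remark about why the absolute bound $L \leq 2N$ (rather than an approximation-ratio bound) is what matters is a nice clarification, but the argument is the same.
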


By coloring the two sides of the zig-zag strip differently, we can also bicolor the surface of $P$ in any pattern we wish,
as long as each grid square is assigned a uniform color.
We do not prove this result formally here, but mention that
the bounds in length would become somewhat worse,
and the rigid motions presented in Section~\ref{StripFolding:subsec:Collision}
do not work in this setting. 


\section{Rigid Motion Avoiding Collision}
\label{StripFolding:subsec:Collision}

So far we have focused on just specifying a final folded state for the strip,
and ignored the issue of how to continuously move the strip into that folded
state.  In this section, we show how to achieve this using a \emph{rigid
folding motion}, that is, a continuous motion that keeps all polygonal
faces between hinges rigid/planar, bending only at the hinges,
and avoids collisions throughout the motion.
Rigid folding motions are important for applications to real-world programmable
matter made out of stiff material except at the hinges.
Our approach may still suffer from practical issues, as it requires a large
(temporary) accumulation of many layers in an accordion form.

We prove that, if the grid polyhedron $P$ has feature size at least~$2$,
then we can construct a rigid motion of the strip folding without collision.
(By \emph{feature size at least $2$}, we mean that every exterior voxel of $P$
is contained in some $2\times 2\times 2$ box with empty interior.  For example,
Figure~\ref{dog1} does not have this property because of the voxels
between adjacent legs and ears.)
If the grid polyhedron we wish to fold does not have this property,
then one solution is to scale the polyhedron by a factor of~$2$,
and then the results here apply.

\subsection{Approach}

\begin{figure}
\centering

  \subfigure[\label{StripFolding:fig:CanonicalAccordion}]
            {\includegraphics[scale=1]{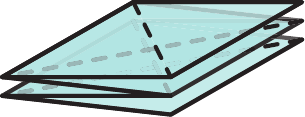}}
  \subfigure[\label{StripFolding:fig:ZigZagAccordion}]
            {\includegraphics[scale=1]{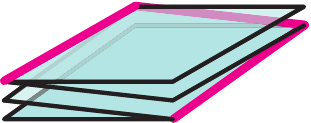}}
\caption{For both strip types, we can fold the unused portion of the strip
(initially, all of it) into an accordion to avoid collision during the folding
motion.  (a) Canonical strip. (b) Zig-zag strip, with hinges drawn thick for increased visibility.}
\label{StripFolding:fig:Accordion}
\end{figure}

Our approach is to keep the yet-unused portion of the strip folded up
into an \emph{accordion} and then to unfold only what is needed for the current
move: straight, left, or right.
Figure~\ref{StripFolding:fig:Accordion} shows the accordion state
for the canonical strip and for the zig-zag strip.
%
We will perform the strip folding in such a way that
the strip never penetrates the interior of the polyhedron~$P$,
and it never weaves under previous portions of the strip. 
Thus, we could wrap the strip around $P$'s surface
even if $P$'s interior were already a filled solid.
This restriction helps us think about folding the strip locally,
because some of $P$'s surface may have already been folded
(and it thus should not be penetrated) by earlier parts of the strip.

It suffices to show, regardless of the local geometry of the polyhedron
at the grid square where the milling tour either goes straight or turns,
and regardless of whether the accordion faces up or down relative to the
grid square it is covering (see Figure~\ref{StripFolding:fig:FaceUpDown}),
that we can maneuver the accordion in a way
that allows us to unroll as many squares as necessary to
perform the milling-tour move.
A na\"{\i}ve enumeration of the combinations of local geometry,
accordion facing up or down,
tour move (left turn, right turn, or straight),
and the type of strip being folded,
would lead to a large number of cases, but fortunately we can
narrow this space down to a small number of essential cases.

\begin{figure}
\centering
\subfigure[]{\includegraphics[width=0.45\linewidth]{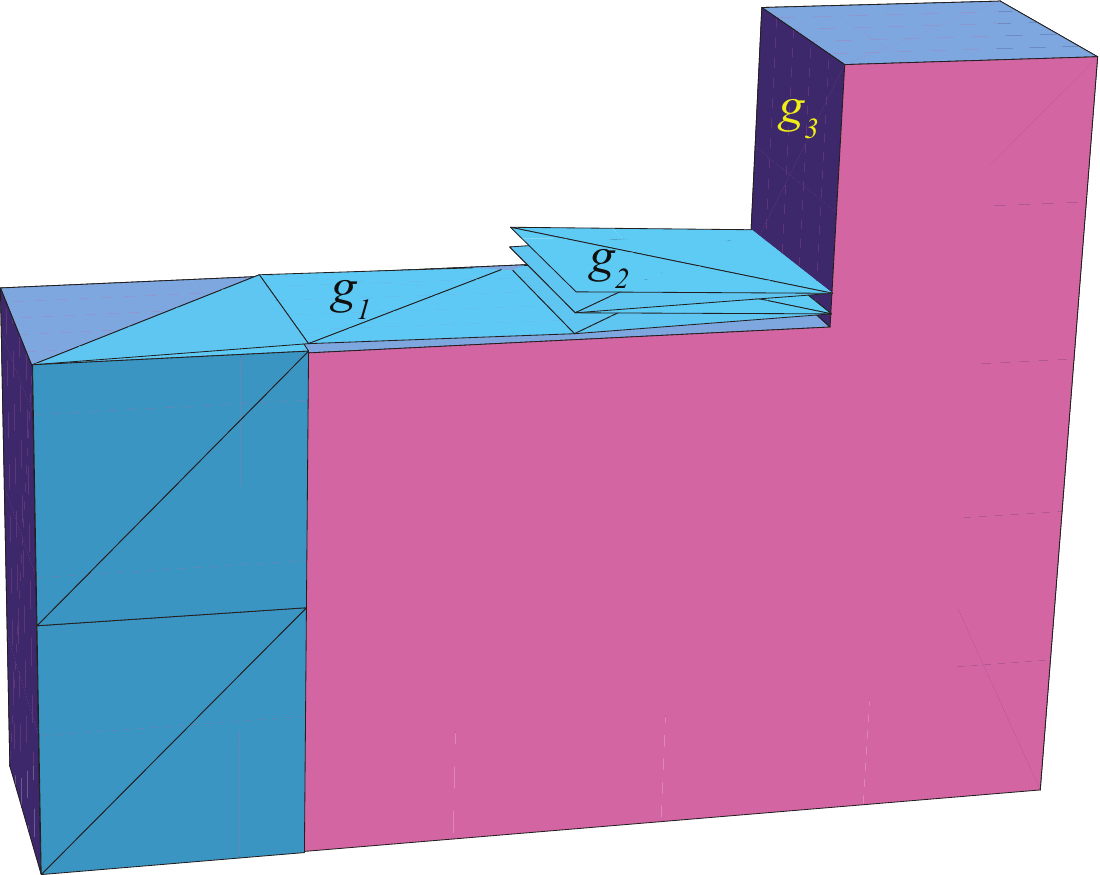}}
\hfill
\subfigure[]{\includegraphics[width=0.45\linewidth]{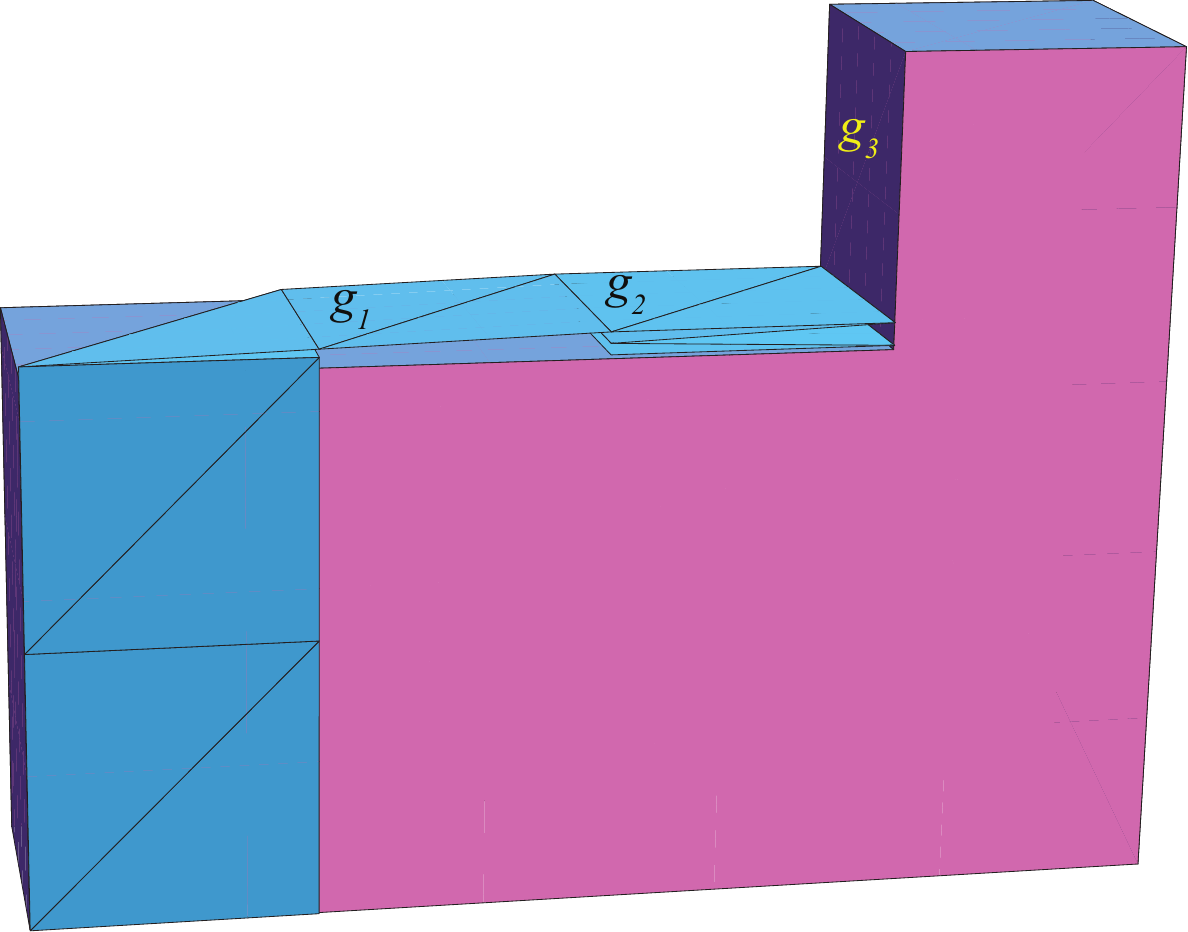}}
\caption{The accordion covering grid square $g_2$ (a) face up, and
  (b) face down.}
\label{StripFolding:fig:FaceUpDown}
\end{figure}

We use the following notation to describe such a local case.
Let $g_1$, $g_2$, and $g_3$ denote the grid square previously visited by the strip, the current grid square,
and the next grid square the strip will visit, respectively.  Let $e_{12}$ refer to the edge shared by $g_1$ and $g_2$,
and let $e_{23}$ refer to the edge shared by $g_2$ and $g_3$.
The local geometry of $P$ at $g_2$ is determined entirely by the
position of $e_{12}$ relative to $e_{23}$
(if they are adjacent edges of $g_2$, then we have a turn,
and if they opposite edges, then we have a straight)
and whether each of those edges is \emph{reflex}
($270^\circ$ dihedral angle), \emph{flat} ($180^\circ$ dihedral angle),
or \emph{convex} ($90^\circ$ dihedral angle), respectively.

Next we reduce the number of cases we need to consider.
First, left and right turns are symmetric, so we consider only right turns.
Second, when the accordion is facing up, going straight or turning is
relatively straightforward.  Figure~\ref{FaceUp} shows the moves for a
canonical strip going straight (an extrinsic rotation of some multiple of
$90^\circ$), and one case of turning (when $e_{23}$ is flat or convex);
we omit the remaining turning case (when $e_{23}$ is reflex) as it is
similar to the F--R turn case below.
The same strategy for going straight can be applied to the zig-zag strip,
unfolding the appropriate crease to go left, straight, or right.
Thus we focus our case analysis on cases where the accordion is facing down.
Maneuvering the strip is more difficult in this case,
because the accordion naturally wants to unfold into the interior of~$P$,
which we forbid (and which could be a real problem because the surface of $P$
may have already been covered by previous portions of the strip).

\begin{figure}
\centering
\subfigure[]{\includegraphics[width=0.95\linewidth]{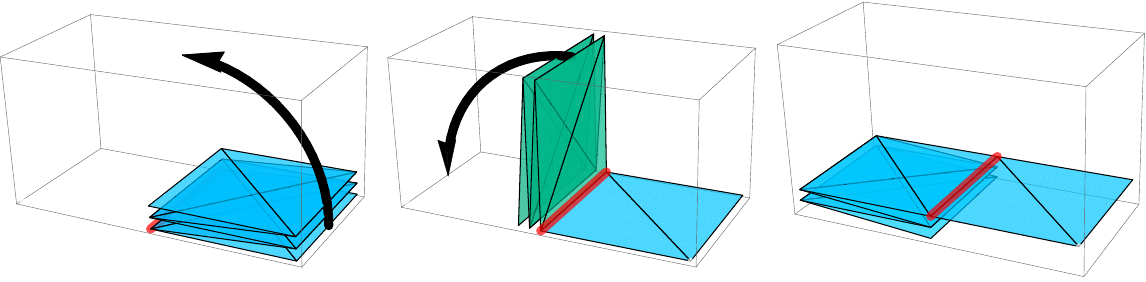}}

\subfigure[]{\includegraphics[width=0.95\linewidth]{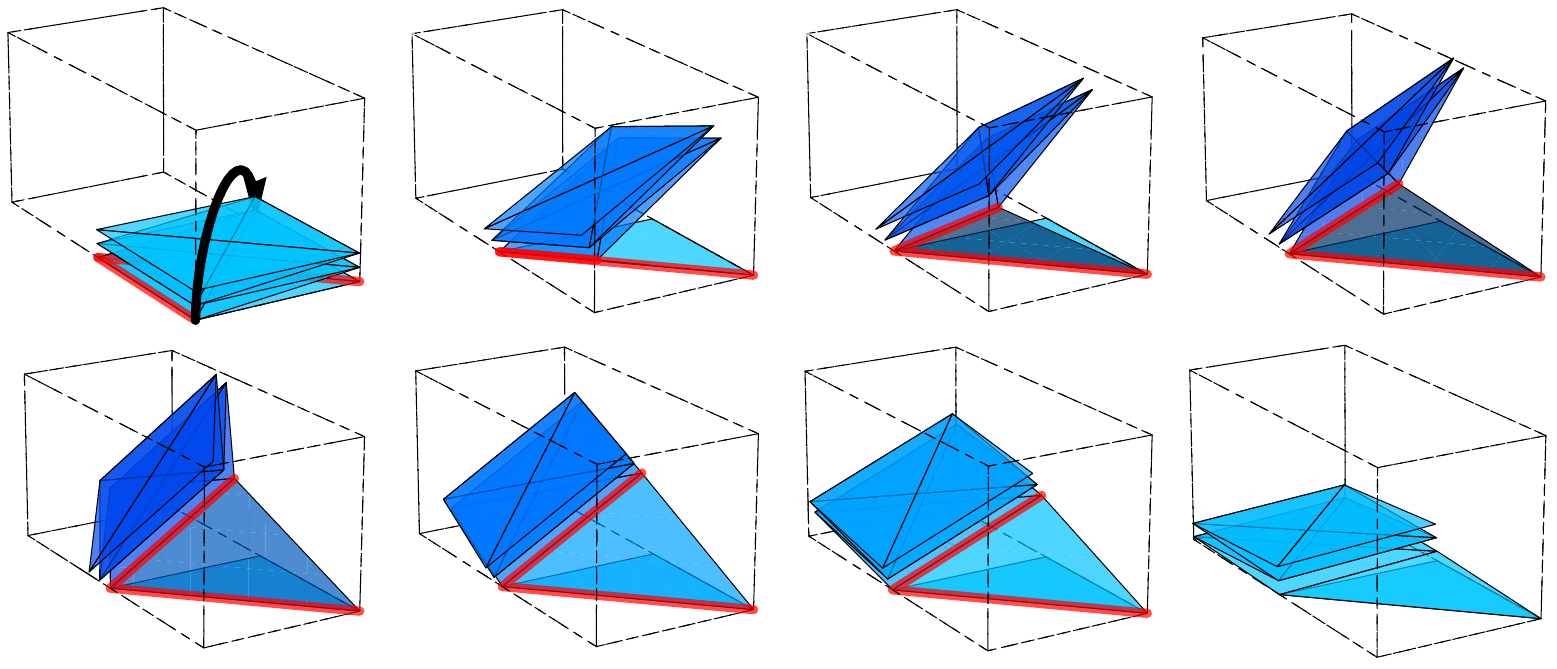}}
\caption{Continuous folding motion within a $1 \times 1 \times 2$ box for
canonical strip, face-up cases. (a) Straight. (b) Turn where $e_{23}$ is flat.}
\label{FaceUp}
\end{figure}


\subsection{Canonical Strip}
For the canonical strip, there are two turn cases and one straight case.

\paragraph{Turns.}
In the F--R turn case, $e_{12}$ is a flat edge and $e_{23}$ is a reflex edge,
as illustrated in Figure~\ref{StripFolding:fig:CanonicalTurnCase1Geometry}.
The moves we present for this case also work for any combination of $e_{12}$
being convex or flat and $e_{23}$ being reflex, flat, or convex.
Hence these moves cover all turn cases except for the R--R case
where $e_{12}$ and $e_{23}$ are both reflex;
see Figure~\ref{StripFolding:fig:CanonicalTurnCase2Geometry}.

Figures~\ref{StripFolding:fig:CanonicalTurnCase1Moves}
and~\ref{StripFolding:fig:CanonicalTurnCase2Moves} illustrate
the moves for the F--R and R--R turn cases, respectively.
The F--R turn case does not exploit feature size,
which is why the moves here still work for
any combination of $e_{12}$ being convex or flat
and $e_{23}$ being reflex, flat, or convex.
The R--R turn case uses that the dashed voxel in
Figure~\ref{StripFolding:fig:CanonicalTurnCase2} one unit away from $g_1$
must be empty because we assumed feature size at least~$2$,
and so the moves here rely on the specific position of $g_1$ relative to $g_2$
(i.e., they rely on $e_{12}$ being a reflex edge).

\begin{figure}
  \centering
  \subfigure[\label{StripFolding:fig:CanonicalTurnCase1Geometry}]
            {\includegraphics[scale=0.5]{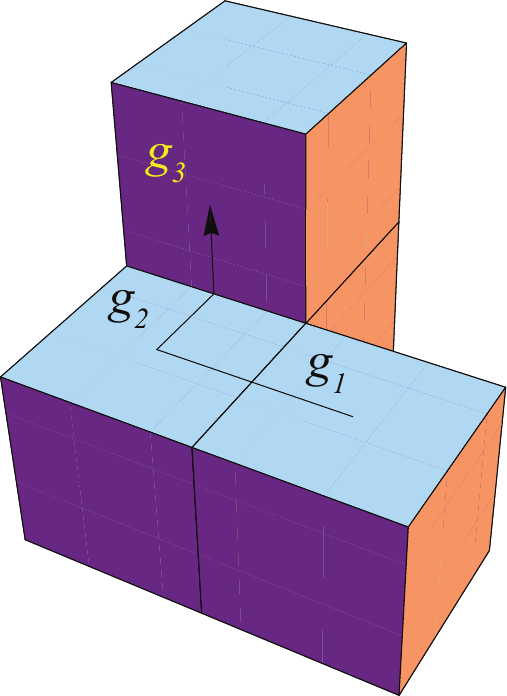}}\hfil\hfil\hfil
  \subfigure[\label{StripFolding:fig:CanonicalTurnCase1Moves}]
            {\includegraphics[scale=0.6]{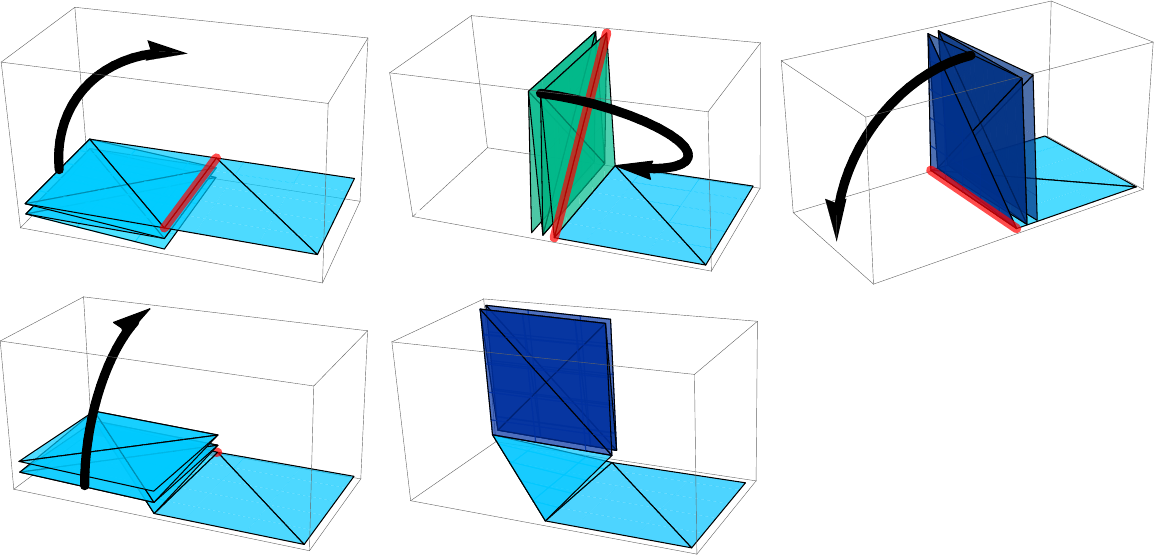}}
  \caption{Continuous folding motion for canonical strip, F--R turn case.
  (a) Local geometry. (b) Moves within a $1 \times 1 \times 2$ box
  for a face-down canonical accordion.}
  \label{StripFolding:fig:CanonicalTurnCase1}
\end{figure}

\begin{figure}
  \centering
  \subfigure[\label{StripFolding:fig:CanonicalTurnCase2Geometry}]
            {\includegraphics[scale=0.5]{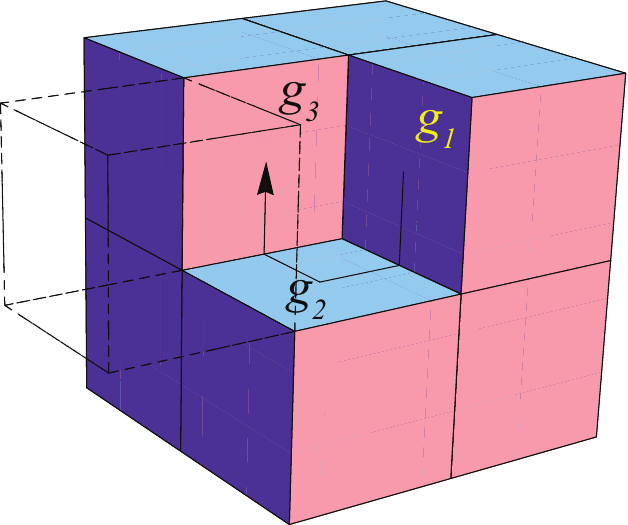}}\hfil\hfil\hfil
  \subfigure[\label{StripFolding:fig:CanonicalTurnCase2Moves}]
            {\includegraphics[scale=0.5]{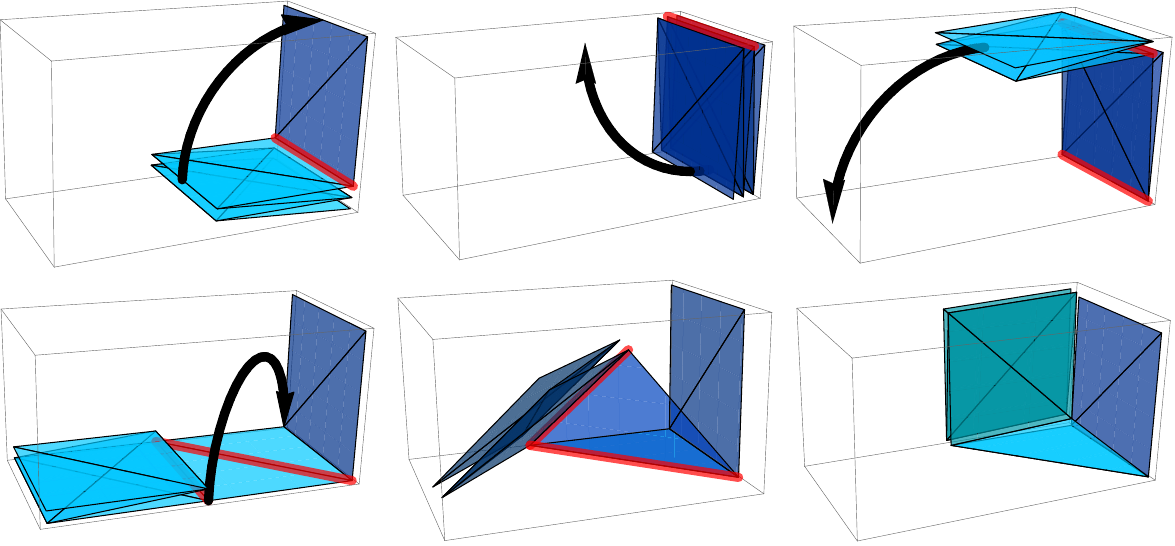}}
  \caption{Continuous folding motion for canonical strip, R--R turn case.
  (a) Local geometry. The dashed voxel beginning one unit away from $g_1$ must be empty because we assumed feature size at least~$2$.
  (b) Moves within a $1 \times 1 \times 2$ box for a face-down canonical accordion.}
  \label{StripFolding:fig:CanonicalTurnCase2}
\end{figure}

\paragraph{Straight.}
In the R--C straight case,
$e_{12}$ is a reflex edge and $e_{23}$ is a convex edge;
see Figure~\ref{StripFolding:fig:CanonicalStraightCase1Geometry}.
Figure~\ref{StripFolding:fig:CanonicalStraightCase1Moves} shows
the moves for this straight case where the canonical accordion lies face-down
on~$g_2$.
The moves for this case can be modified for any straight case
(i.e., any combination of edge types for $e_{12}$ and $e_{23}$)
except the R--R case when both $e_{12}$ and $e_{23}$ are reflex,
but this case is impossible while having the exterior voxel adjacent to $g_2$
in an empty $2 \times 2 \times 2$ box.

\begin{figure}
  \centering
  \subfigure[\label{StripFolding:fig:CanonicalStraightCase1Geometry}]
            {\includegraphics[scale=0.5]{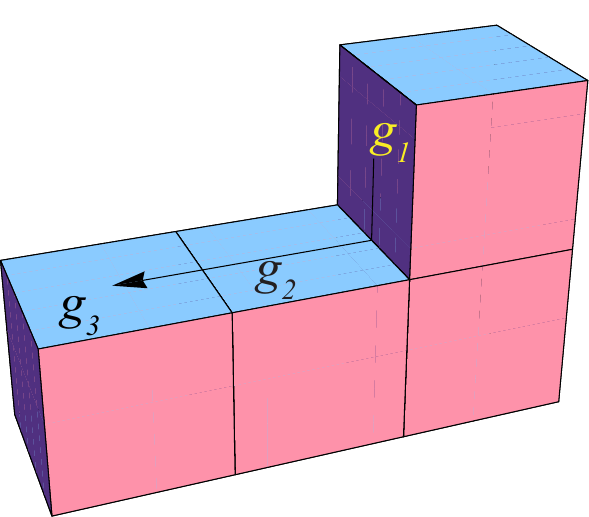}}\hfil\hfil\hfil
  \subfigure[\label{StripFolding:fig:CanonicalStraightCase1Moves}]
            {\includegraphics[scale=0.55]{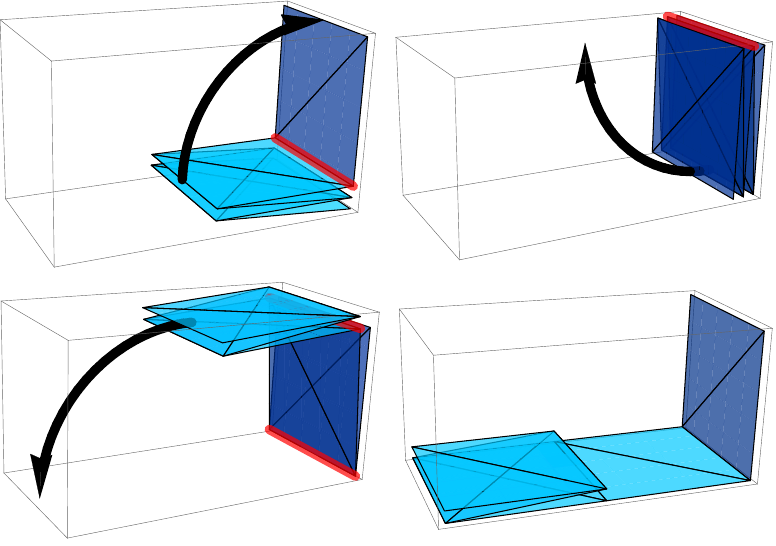}}
\caption{Continuous folding motion for canonical strip, R--C straight case.
(a) Local geometry.
(b) Moves within a $1 \times 1 \times 2$ box for a face-down canonical accordion.}
\label{StripFolding:fig:CanonicalStraightCase1}
\end{figure}

\subsection{Zig-Zag Strip}
The analysis for the zig-zag strip accordion moves is slightly more complicated
because we now also have to consider whether a turn costs $1$ unit or $3$ units,
as the moves can be different depending upon which turn-cost case we are in.

\paragraph{Turns.}
There are four turn cases for the zig-zag gadget that need to be considered:
\begin{description}
\item[F--R--$1$:] $e_{12}$ is flat, $e_{23}$ is reflex, and the turn costs~$1$;
\item[R--F--$*$:] $e_{12}$ is reflex, $e_{23}$ is flat, and the turn costs $1$ or~$3$
(the moves work regardless of the turn cost);
\item[R--R--$1$:] $e_{12}$ and $e_{23}$ are reflex, and the turn costs~$1$; and
\item[R--R--$3$:] $e_{12}$ and $e_{23}$ are reflex, and the turn costs~$3$.
\end{description}

We explain briefly why these are the only turn cases that need to be considered.
The moves for case F--R--1 also work when $e_{12}$ is convex and
$e_{23}$ is reflex with turn cost~$1$.
The moves for case R--F--$*$ also work for any combination of edge types
for $e_{12}$ and $e_{23}$ where $e_{23}$ is not reflex. 
The moves for case R--R--3 work when $e_{12}$ is flat or convex and
$e_{23}$ is reflex with turn cost~$3$.
Thus these cases collectively cover all possible cases we need to consider
for the zig-zag strip.

For brevity, we show the moves for just one of the turn cases, R--F--$*$.
Figure~\ref{StripFolding:fig:ZigZagTurnCase2} illustrates turn case R--F--$1$.
The same moves also work for turn cost~$3$. 

\begin{figure}
  \centering
  \subfigure[\label{StripFolding:fig:ZigZagTurnCase2Geometry}]
            {\includegraphics[scale=0.45]{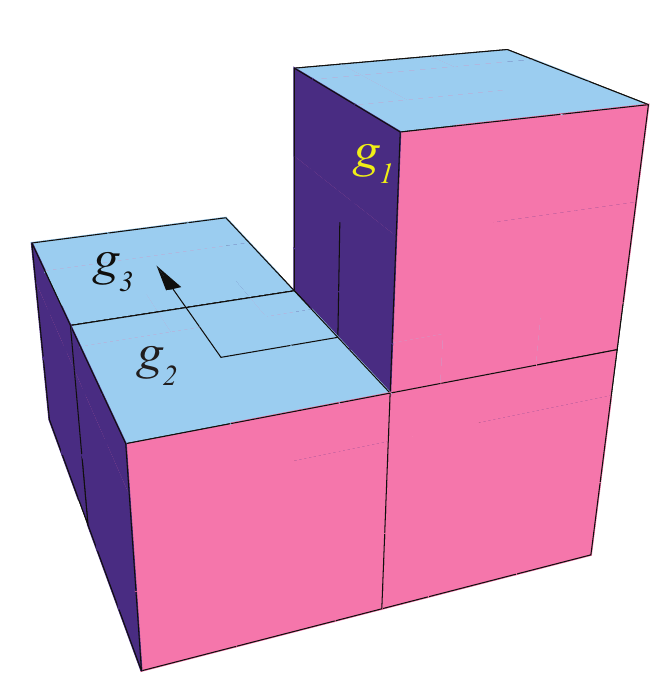}}\hfil\hfil\hfil
  \subfigure[\label{StripFolding:fig:ZigZagTurnCase2Moves}]
            {\includegraphics[scale=0.55]{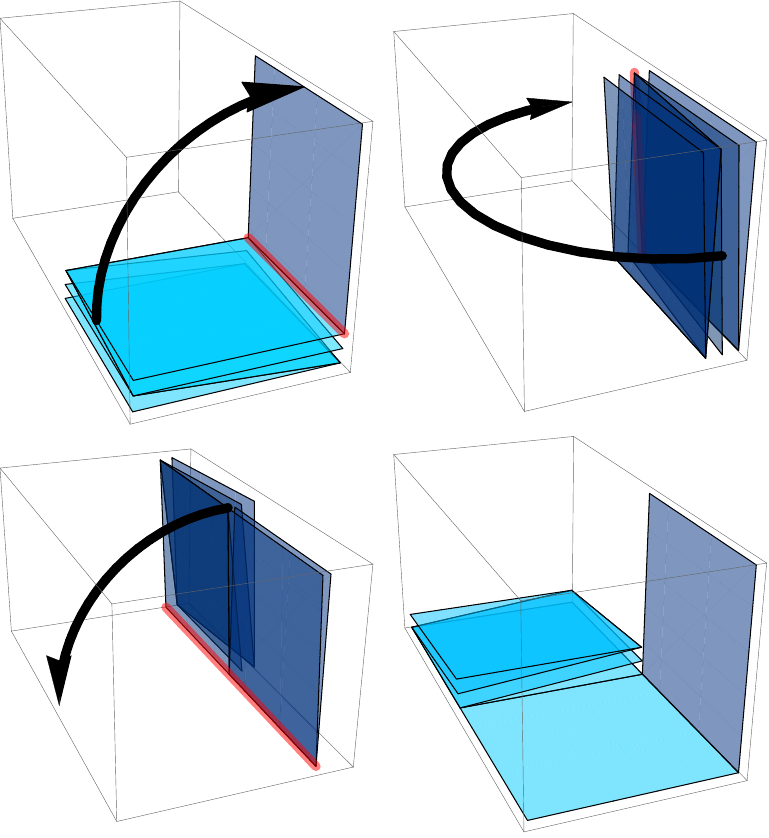}}
  \caption{Continuous folding motion for zig-zag strip, R--F--$1$ turn case. (a) Local geometry. (b) Moves within a $1 \times 1 \times 2$ box for a face-down zig-zag accordion where the turn
costs~$1$.  These moves also work for turn cost~$3$, and for any combination of edge types for
$e_{12}$ and $e_{23}$ not having $e_{23}$ reflex.}
  \label{StripFolding:fig:ZigZagTurnCase2}
\end{figure}

\paragraph{Straight.}
As with the canonical strip, it suffices to consider the R--C straight case
where $e_{12}$ is a reflex edge and $e_{23}$ is a convex edge.
The moves for this case with the zig-zag strip are analogous to those
in Figure~\ref{StripFolding:fig:CanonicalStraightCase1}
presented for the canonical strip, so we do not repeat them.

\section*{Acknowledgments}

We thank ByoungKwon An and Daniela Rus for several helpful discussions
about programmable matter that motivated this work.

\let\realbibitem=\bibitem
\def\bibitem{\par \vspace{-1.2ex}\realbibitem}

\bibliography{StripFolding}
\bibliographystyle{alpha}

\begin{figure}[b]
  \centering
  \includegraphics[width=\linewidth]{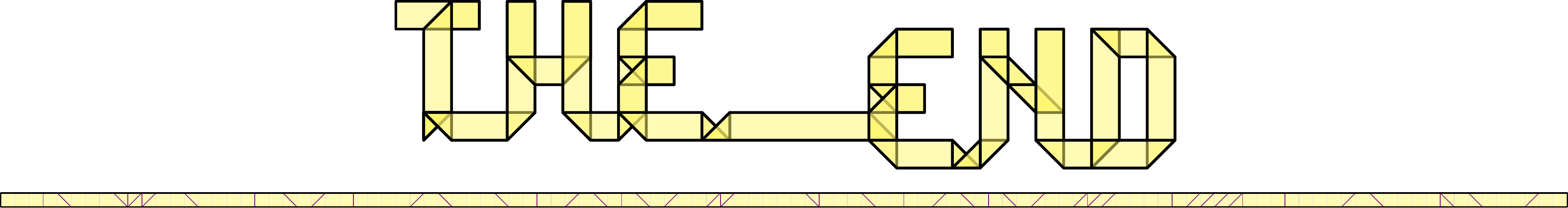}
  \caption{An example of joining together a few letters from our typeface
    in Figure~\ref{font}.  Unfolding (bottom) not to scale with folding
    (top).}
  \label{the end}
\end{figure}

\end{document}